\newcommand{\blind}{1}
\numberwithin{equation}{section}
\newtheorem{thm}{Theorem}[section]
\newtheorem{prop}{Proposition}
\newtheorem{lemma}{Lemma}[section]
\newtheorem{proc}{Procedure}
\newtheorem{rem}{Remark}[section]
\newtheorem{assumption}{Assumption}
\newcommand{\kR}{\mathbb{R}}
\newcommand{\kN}{\mathbb{N}}
\newcommand{\Xt}{(X_t)_{t_0\leq t\leq t_n}}
\newcommand{\norme}[1]{\left\Vert #1\right\Vert}
\newcommand\OU{Ornstein-Uhlenbeck\xspace}
\newcommand{\BRD}{Brownian motion with drift\xspace}
\newcommand{\fBm}{fractional Brownian motion\xspace}
\newcommand{\Pv}{\textit{p}-value\xspace}
\newcommand{\Pvs}{\textit{p}-values\xspace}
\titleformat{\subsection}[runin]
  {\normalfont\bfseries}{\thesubsection}{1em}{}
\begin{document}




%
%

\if1\blind
{
  \title{\bf A statistical analysis of particle trajectories in living cells}
  \author{V. Briane$^{\star \ddag}$, C. Kervrann$^\star$, M. Vimond$^\ddag$\\
    INRIA, Centre de Rennes Bretagne Atlantique, Serpico Project-Team$^\star$\\
    CREST (Ensai,  Universit\'e Bretagne Loire)}
  \maketitle
} \fi

\if0\blind
{
  \bigskip
  \bigskip
  \bigskip
  \begin{center}
    {\LARGE\bf A statistical analysis of particle trajectories in living cells}
\end{center}
  \medskip
} \fi

\bigskip
\begin{abstract}
Recent advances in molecular biology and fluorescence microscopy imaging have made possible the inference of the dynamics of single mole\-cules in living cells. Such inference allows to determine the organization and function of the cell. The trajectories of particles in the cells, computed with tracking algorithms, can be modelled with diffusion processes. Three types of diffusion are considered : (i) free diffusion; (ii) subdiffusion or (iii) superdiffusion.
The Mean Square Displacement (MSD) is generally used to determine the different types of dynamics of the particles in living cells \citep{qian1991single}.
We propose here a non-parametric three-decision test as an alternative to the MSD method.
The rejection of the null hypothesis -- free diffusion -- is  accompanied by claims of the direction of the alternative (subdiffusion or a superdiffusion).
We study the asymptotic behaviour of the test statistic under the null hypothesis, and under parametric alternatives which are currently considered in the biophysics literature, \citep{monnier2012bayesian} for example.
In addition, we adapt the procedure of \cite{benjamini2000adaptive} to fit with the three-decision test setting, in order to apply the test procedure to a collection of independent trajectories.
The performance of our procedure is much better than the MSD method as confirmed by Monte Carlo experiments. The method is demonstrated on real data sets corresponding to protein dynamics observed in fluorescence microscopy.
\end{abstract}

\noindent%
{\it Keywords:}   Three-Decision Test, Multiple Hypothesis Testing, Diffusion Processes 
\vfill

\newpage
\section{Introduction}

A cell is composed of lots of structures in interaction with each other. They continuously exchange biological material, such as proteins, directly via the cytosol or via networks of polymerised filaments namely the microtubules,  actin filaments and intermediate filaments. The dynamics of these proteins determine the organization and function of the cell \citep[chapter 9]{bressloff2014stochastic}.
The traffic is known to be oriented and it is established that local dynamics of proteins obey to biophysical laws, including  subdiffusion (diffusion in a closed domain or in an open but crowded area), free diffusion (or Brownian motion) and superdiffusion (active transport along the microtubules). 
Then, inference on the modes of mobility of molecules is central in cell biology since it reflects the interaction of the  structures of the cell. For instance the postsynaptic AMPA-type glutamate receptors (AMPARs) is a protein involved in the fast excitatory synaptic transmission : it plays a crucial part in many aspects of brain functions including learning, memory and cognition. The dynamics of AMPARS determine synaptic transmission : aberrant AMPAR trafficking is implicated in neurodegenerative process, see \cite{henley2011routes}. \cite{hoze2012heterogeneity} model their motion with diffusions confined in a potential well. 
As an other example, \cite{lagache2009quantitative} model the dynamics of a virus invading a cell to infer its mean arrival time to the cell nucleus where it replicates. In the model of \cite{lagache2009quantitative}, the dynamic of the virus alternates between superdiffusion and Brownian motion. In this paper, we are interested by the classification of individual intracellular particle trajectories  into three modes of mobility: subdiffusion, free diffusion and superdiffusion (see Figure \ref{fig:typical_traj}). Usually, in the biophysics literature, the definition of these dynamics is related to the criterion of the mean square displacement (MSD), see for example \citep{qian1991single}. 
Given a particle trajectory $(X_t)_{t>0},$ the MSD is defined as the function,
\begin{equation}
\mathrm{MSD}(t)=\mathbb{E}\left(\left\|X_{t+t_0}-X_{t_0}\right\|^2\right),
\end{equation}
where $\|\cdot\|$ is the euclidean norm and $\mathbb{E}$ is the expectation of the probability space. 
If the MSD is linear ($\mathrm{MSD}(t)\propto t$), the trajectory is a free diffusion. 
In the biophysics literature \citep{qian1991single,saxton1997single}, this kind of diffusion is associated to the Brownian motion (or Wiener process in mathematics). \cite{kou2008stochastic} defines the physical Brownian motion via the Langevin equation with white noise which is different from the biophysical Brownian motion. In this case we have $\mathrm{MSD}(t)\propto t$ for large $t$ only. \cite{bressloff2014stochastic} argues that both definitions of the Brownian motion can be used to model intracellular dynamics in the case where the particle evolves freely inside the cytosol or along the plasma membrane. We decided to pick the biophysical definition corresponding to the Wiener process in mathematics as \cite{lysy2016model} did.
If the MSD is sublinear (for instance $\mathrm{MSD}(t) \sim t^\beta$ with $\beta\in(0,1)$), the trajectory is a subdiffusion \cite{lysy2016model}. Subdiffusion, which includes confined diffusion and anomalous diffusion, are the translations of several biological scenarios.
Confined or restricted diffusion \citep{metzler2000random,hoze2012heterogeneity} is characteristic of trapped particles: the particle  encounters a binding site, then it pauses for a while before dissociating and moving away. 
Anomalous diffusion includes particles which encounters  dynamic or fixed obstacles \citep{saxton1994anomalous,berry2014anomalous}, or particles  slowed by the contrary current due to the viscoelastic properties of the cytoplasm. 
In this paper, we will not distinguish confined and anomalous diffusion and consider that both are subdiffusion. \citet{meroz2015toolbox} presents a wide range of models for subdiffusion including fractional Brownian motion and the \OU process. The \OU process is widely used for modeling subdiffusion as it is the solution of the overdamped Langevin equation \citep{schuss2009theory,hoze2012heterogeneity}.
In cell biology and biophysics, superdiffusions model the motion of molecular motors and their cargo: the motion is faster and in a specific direction.
The main type of active intracellular transport involves molecular motors which carry particles (called in this context cargo) along microtubular filament tracks.
Superdiffusions are associated to the case where $\mathrm{MSD}(t) \sim t^\beta$ with $\beta>1$ \citep{feder1996constrained}.


\begin{figure}[t]
\centering
\begin{tabular}{cc}
(a) & (b)\\
\pgfplotstableread {fig/typ_traj.dat} {\loadedtable}
\pgfplotsset{
    compat=newest,
    every axis/.append style={
        legend image post style={xscale=0.5}
    }
}
\begin{tikzpicture}[baseline]
\begin{axis}
[xlabel=,ylabel=,
minor tick num=4,
xtick=\empty,ytick=\empty,
enlarge x limits=false,
enlarge y limits=false,
width=0.5\linewidth,
] 

\addplot[color=blue] table[x=x_br,y=y_br] from \loadedtable;
\addplot[color=cyan] table[x=x_ou,y=y_ou] from \loadedtable;
\addplot[color=purple] table[x=x_brd,y=y_brd] from \loadedtable;
\addplot[color=red] table[x=x_fbm_sup,y=y_fbm_sup] from \loadedtable;
\addplot[color=green] table[x=x_fbm_sub,y=y_fbm_sub] from \loadedtable;


\end{axis}
\end{tikzpicture}
&
\includegraphics[width=0.38\linewidth]{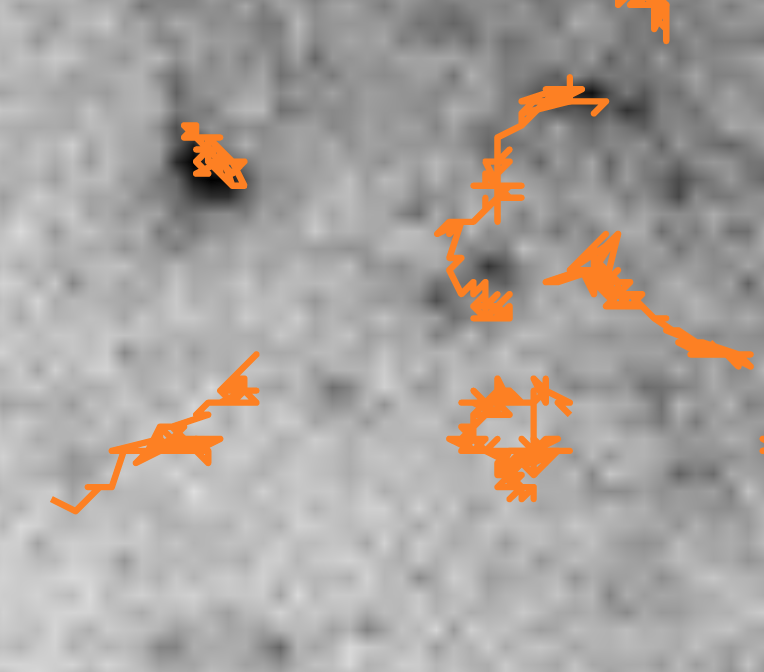}
\end{tabular}
\caption{Representative trajectories from (a) simulated data, (b) a Rab11a protein sequence in a single cell.(Courtesy of UMR 144 CNRS Institut Curie - PICT IBiSA). For the simulated data in Figure (a), the blue trajectory is Brownian, the purple one is from a \BRD, the red one from a fractional Brownian motion (parameter $\mathfrak{h}>1/2$), in cyan it is from a \OU process and in green from a fractional Brownian motion ($\mathfrak{h}<1/2$). The parameters of the processes are given in Table \eqref{tab:param_val}.}
\label{fig:typical_traj}
\end{figure}

\subsection{The problem} 
We observe the successive positions of a single particle $X_{t_0},X_{t_1},\dots,X_{t_n}$ in the real plan at equispaced times, that is $t_{i+1}-t_{i}=\Delta$. Our aim is to decide if the trajectory is a free diffusion, a subdiffusion or a superdiffusion.
A popular statistic used to determine the motion model is the pathwise Mean Square Displacement (MSD). 
It is estimated at lag $j$ by:
\begin{equation}
\widehat{\mathrm{MSD}}(j \Delta )=\frac{1}{n-j+1}\sum\limits_{k=0}^{n-j}\|X_{t_{k+j}}-X_{t_k}\|^2.
\label{eq:msd_est}
\end{equation}
The simplest rule to classify a trajectory with the MSD is based on the least-squares estimate of the slope $\beta$ of the log-log plot of the MSD \textit{versus} time \cite{feder1996constrained}.  
\cite{didier2015asymptotic} study the limiting distribution of the pathwise MSD according to the true value of $\beta.$
Nevertheless, MSD has some limitations.\\
First the MSD statistic is a summary statistic, and does not suffice to characterize the dynamics of the trajectory. \cite{gal2013particle} present several other statistics which can be associated to MSD for trajectory analysis. \cite{lund2014spattrack} propose a decision tree for selection motion model combining MSD, Bayesian information criterion and the radius of gyration. \cite{lysy2016model} present a likelihood-based inference as an alternative to MSD for the comparison between two models of subdiffusions : fractional Brownian motion  and a generalized Langevin equation. They consider a Bayesian model to estimate the parameter of the diffusion and they use the Bayes factor to compare the models.\\
Second, the variance increases with the time lag (see Figure 5 Appendix C (Supplementary Materials)): only the first few points of the MSD may be used to estimate the slope. Moreover the MSD variance is also severely affected at short time lags by dynamic localization error and motion blur. \cite{michalet2010mean} details an iterative method, known as the Optimal Least Square Fit (OLSF) for determining the optimal number of points to obtain the best fit to MSD in the presence of localization uncertainty.\\ 
In order to take account of the variance of the MSD estimate, several authors use a set of independent trajectories rather than single trajectories. These trajectories may have different lengths but are assumed to have the same kind of motion. For instance \cite{Pisarev2015} consider weighted-least-square estimate for $\beta$ by estimating the variance of pathwise MSD. Their motion model selection is then based on the modified Akaike's information criterion. \cite{monnier2012bayesian} propose a Bayesian approach to compute relative probabilities of an arbitrary set of motion models (free, confined, anomalous or directed diffusion). In general, this averaging process can lead to oversimplication and misleading conclusions about the biological process \cite{gal2013particle}. 

\subsection{Our contribution}

In this paper, we propose a measure that circumvents some limitations of the MSD and which is efficient for classifying single trajectories. Our procedure is a three-decision test procedure \citep{shaffer1980control}. The null-hypothesis is that the observed trajectory is generated from a Brownian motion and the two distinct alternatives are subdiffusion and superdiffusion. The test statistic $T_n$ is the standardized largest distance covered by the particle from its starting point.
We interpret this measure as follows: i/ if the value of $T_n$ is low, it means that the process stayed close to its initial position and the particle may be trapped in a small area or hindered by obstacles (subdiffusion); ii/ if the value of $T_n$ is high, the particle went far to its initial position and the particle may be driven by a motor in certain direction (superdiffusion). In our model, we restrict subdiffusion and superdiffusion to processes which are solution of a stochastic differential equation. However, our procedure can be extended to others types of subdiffusion in principle. 
Then, we study the asymptotic behaviour of our procedure under the null hypothesis and four parametric models illustrating superdiffusion and subdiffusion and which are commonly considered in the biophysics literature.
As stated before, we will not distinguish confined and anomalous diffusion and consider that both are subdiffusion.
The study of the behaviour of the test statistic under all existing subdiffusions process is beyond the scope of this paper.
Such refinements will be considered for a next issue.
At the end, we derive a multiple test procedure in order to apply simultaneously the test on a collection of independent trajectories which are tracked inside the same living cell.
This procedure is an adaptation of the procedure of  \cite{benjamini2000adaptive}. Then it allows to control the false discovery rate (FDR). Moreover, in case of rejection of the null hypothesis, our multiple test procedure is able to state for which alternative (subdiffusion or superdiffusion) we reject the null hypothesis. \\

The present paper is organized as follows.
In Section \ref{sec:diffusion}, we describe the inference model and provide some examples of subdiffusion and superdiffusion. 
Our testing procedure is defined in Section \ref{sec:test}.
In Section \ref{sec:multiple_test}, we derive a multiple testing procedure for a collection of trajectories.
We carry out a simulation study and illustrate the method on real data in Section \ref{sec:exp-res}. We focus on the analysis of the Rab11a GTPase protein. This protein is involved in the trafficking of molecules from the endosomes located inside the cell to the cell plasma membrane.  The data are computed from temporal sequences of TIRF microscopy images depicting the last steps of exocytosis events observed in the region very close the plasma membrane \citep{Schafer2014}. The proofs are postponed to the appendix.
\section{Diffusion models for particle trajectories}
\label{sec:diffusion}
We observe the successive positions of a single particle in a two-dimensional space at times $t_0,t_1,\dots,t_n$. We suppose that the lag time between two consecutive observations is a constant $\Delta$. The observed trajectory of the particle is,
\begin{equation*}
\mathbb{X}_n=\left({X_{t_0},X_{t_1},\dots,X_{t_{n}}}\right),
\end{equation*}
where $X_{t_i}=(X^1_{t_i},X^2_{t_i})\in \mathbb{R}^2$ is the position of the particle at time $t_i=t_0+i\Delta$, $i=0,\ldots,n$.
This discrete trajectory is generated by a stochastic process $\Xt$ with continuous path and assumed to be solution of the stochastic differential equation (SDE) :
\begin{equation}
dX^i_t=\mu_i(X^i_t)dt+\sigma dB^{\mathfrak{h},i}_t, \quad i=1,2.
\label{eq:sde}
\end{equation}
where $B^{\mathfrak{h},i}_t$ are unobserved independent 1D- fractional Brownian motions of unknown Hurst parameter $\mathfrak{h}$, $\sigma>0$ is the unknown diffusion coefficient and $(\mu_1(x_1),\mu_2(x_2)):\kR^2 \mapsto \kR^2$ is the unknown drift term.
\begin{assumption}
We assume that $\mu_i$ fulfils the linear growth hypothesis :
\begin{equation}
\exists K>0, \quad \forall x \in \kR, \quad |\mu_i(x)|\leq K (1+|x|),
\label{eq:linear_growth}
\end{equation}
and the Lipschitz condition :
\begin{equation}
\exists M>0, \quad \forall (x,y) \in \kR^2, \quad |\mu_i(x)-\mu_i(y)|\leq M \vert x-y \vert. 
\label{eq:lipschitz}
\end{equation}
\label{as:drift}
\end{assumption}
We denote by $\mathcal{L}$ the set of functions verifying Assumption \ref{as:drift}. 
Assumption \ref{as:drift} is sufficient to ensure that SDE \eqref{eq:sde} admits a strong solution  (see \citet{nualart2002regularization} for the case $0<\mathfrak{h}\leq 1/2,$ and \citet[Chapter ~3]{mishura2008stochastic} for the case $1/2<\mathfrak{h}<1$). For a given fractional Brownian motion, we say that $(X_t)$ is a strong solution of the SDE \eqref{eq:sde} if $(X_t)$ verifies \eqref{eq:sde}, has continuous paths and that, at time $t$, $X_t$ depends only on $X_{t_0}$ and on the trajectory of the fractional Brownian motion up to time $t$.
In the following, $P_{\mathfrak{h},\mu,\sigma}$ denotes the measure induced by the stochastic process $(X_t)$ solution of \eqref{eq:sde}. This measure comprises all the finite dimensional distributions of the process that is the distribution of the vectors $(X_{t_0},\dots,X_{t_n})$, $n \in \kN*$ and $t_1<,\dots,<t_n$. We also note $\mathcal{P}=\left\{P_{\mathfrak{h},\mu,\sigma} : 0<\mathfrak{h}<1,\mu \in \mathcal{L}, \sigma>0\right\}$ the set of solutions of the SDE \eqref{eq:sde}.  


\begin{rem}
In the following, we adopt the \textit{large-sample scheme} to derive asymptotic properties of our procedure, that is the inter-observation time $\Delta$ remains fixed and the number of observations $n$ tends to infinity. 
In the experimental context of microscopic sequences, $\Delta$ is the resolution of the microscopy device while $n$ is the number of frames during which we track the particle.
Other schemes exist (see \cite[Section 6.1.3]{fuchs2013inference}) as the \textit{high-frequency scheme} for which $\Delta$ tends to zero while the duration of observation is fixed.
\end{rem}

Heuristically, a SDE models the motion of a particle in a fluid submitted to a deterministic force due to the fluid and a random force due to random collisions with other particles.
That is why we model efficiently the motion of intra-cellular particles with these processes. 
In Equation \eqref{eq:sde}, the velocity of the fluid is given by the drift $\mu$ while the term $\sigma dB_t^\mathfrak{h}$ expresses the random component of the motion due to random collisions. 


\subsection{Free diffusion}
Free diffusion or Brownian motion is the most popular process for describing particle motion suspended in a liquid \citep{einstein1956investigations}. 
It suits particularly well for describing intracellular particle motion as the interior of the cell is mainly made of a fluid called the cytosol. 
Brownian motion allows dissolved macromolecules to be passively transported without any input of energy. 
In the SDE \eqref{eq:sde}, it matches with the situation where the drift $\mu_i=0$ and $\mathfrak{h}=1/2$. 

\subsection{Subdiffusion}
We present two models of subdiffusion which are solution of a stochastic differential equation \eqref{eq:sde}. We give their MSD which is, by definition of subdiffusion, sublinear. The first subdiffusion is an example of confined diffusion while the second is an anomalous diffusion. It corresponds to two distinct biological scenarios.

In the first scenario, the particle is attracted by an external force modelled by a potential well. 
We can then use the SDE \eqref{eq:sde} with a specific form for the drift : $\mu_i(x)=-\nabla U_i(x)/\gamma_i$ where $-\nabla U_i$ is the external force of the fluid and $\gamma_i$ is the frictional coefficient. 
For instance, we may consider the Ornstein-Uhlenbeck process :
\begin{equation}
dX^i_t=-\lambda_i (X^i_t-\theta_i)dt+ \sigma dB^{1/2,i}_t,\qquad i=1,2 
\label{eq:sde_OU}
\end{equation}
where $\lambda_i>0$.
Here the particle is assumed to be trapped in a single domain, the potential $U$ is uni-modal and is approximated by a polynomial of order $2$ : $U_i(x)=(1/2)k_i(x_i-\theta_i)^2$. The parameter $k_i=\lambda_i \gamma_i$ measures the strength of attraction of the potential (related to the potential depth) while $\theta=(\theta_1,\theta_2) $ is the equilibrium position of the particle. 
The \OU process is a confined diffusion according to the MSD criterion since its MSD is sublinear,
\begin{equation}
MSD(t)=\frac{2\sigma^2(1-e^{-\lambda t})}{\lambda}\leq 2\sigma^2 t, 
\end{equation} 
here it is written in the case $\lambda_i=\lambda$ for simplicity. A subdiffusion having this form of MSD is known as a confined diffusion (\cite{monnier2012bayesian} \cite{saxton1997single} \cite{Pisarev2015}).

Anomalous diffusion can occur for two main reasons. First the particle can bind to an immobile trap that can generate long jump times \citep{saxton1996binding}. 
In this situation, its motion can be modelled by a continuous time random walk \citep{metzler2000random}. 
We will not consider this model here as it is not solution of the SDE \eqref{eq:sde}. 
Secondly, the particle can be hindered by mobile or immobile obstacles as the interior environment of cells are crowded with solutes and macromolecules \citep{bressloff2013stochastic}. 
Then, a popular model is the fractional Brownian motion \citep{jeon2011vivo}. 
It corresponds to the case $0<\mathfrak{h}<1/2$ and $\mu_i=0$ in \eqref{eq:sde},
\begin{equation}\label{eq:sde_FBM}
dX^i_t=\sigma dB^{\mathfrak{h},i}_t,\qquad i=1,2. 
\end{equation} 
Its MSD is given by:
\begin{equation}
MSD(t)=2\sigma^2 t^\beta\leq 2\sigma^2 t
\label{eq:msd_fbm}
\end{equation}
with $\beta=2\mathfrak{h}<1$. A subdiffusion having this form of MSD is known as an anomalous diffusion (\cite{monnier2012bayesian} \cite{saxton1997single} \cite{Pisarev2015}).

\subsection{Superdiffusion}

At the macroscopic level, the main type of active intracellular transport involves molecular motors which carry particles (cargo) along microtubular filament tracks. 
The molecular motors and their cargo undergo superdiffusion on a network of microtubules in order to reach a specific area quickly. 
The molecular motor moves step by step along the microtubules thanks to a mechanicochemical energy transduction process. 
A single step of the molecular motor is modelled by the so-called Brownian ratchet \citep{reimann2002brownian}. 
When we observe the motion of the molecular motor along a filament on longer time-scales (several steps), its dynamic can be approximated by a Brownian motion with constant drift (also called directed Brownian) \citep[see][]{peskin1995coordinated,elston2000macroscopic}. 

The \BRD is solution of the SDE :
\begin{equation}
dX^i_t=v_idt+\sigma dB^{1/2,i}_t,\qquad i=1,2,
\label{eq:sde_dir}
\end{equation}
where $v=(v_1,v_2)\in\mathbb{R}^2$ is the constant drift parameter modelling the velocity of the molecular motor.
The MSD of the directed Brownian motion is given by:
\begin{equation}
\mathrm{MSD_D}(t)=\norme{v}^2 t^2 + 2\sigma^2 t\geq 2\sigma^2 t
\end{equation}
It is superlinear and thus define a superdiffusion. 
Superdiffusion can also be modelled by the fractional Brownian motion with Hurst parameter $1/2<\mathfrak{h}<1$. Its MSD is given by \eqref{eq:msd_fbm} as we already said. However, this time it is superlinear as $\beta=2\mathfrak{h}>1$. However, we note  that in the biophysics literature the use of the fractional Brownian motion is mainly related to subdiffusion.

\section{A statistical test procedure for a single trajectory}
\label{sec:test}
We suppose that the trajectory $\mathbb{X}_n=(X_{t_0},\dots,X_{t_n})$ is generated from some unknown diffusion process $(X_t)$ solution of the SDE \eqref{eq:sde}. Our procedure allows to test from which type of diffusion the observed trajectory is generated. 

We derive two hypothesis test procedures : one for testing $H_{0}$ "$(X_t)$ is a free diffusion" versus $H_{1}$ "$(X_t)$ is a subdiffusion", the second for testing $H_{0}$ "$(X_t)$ is a free diffusion" versus $H_{2}$ "$(X_t)$ is a superdiffusion".
Then we aggregate the two procedures to build a three-decision procedure. 

\subsection{The test statistic}
Let us consider the standardized maximal distance $T_n$ of the process from its starting point :
\begin{equation}
T_n=\frac{D_n}{\sqrt{(t_n-t_0)\hat{\sigma}^2_n}}
\label{eq:stat:test:discret}
\end{equation}
where $D_n$ is the maximal distance of the process from its starting point, 
\begin{equation}
D_n=\max_{i=1,\dots,n}\left\| X_{t_i}-X_{t_0} \right\|_2
\label{eq:stat_test1_discr}
\end{equation}
and $\hat{\sigma}_n$ is a consistent estimator of $\sigma$. 
The choice of $\hat{\sigma}$ is discussed in Section \ref{subsec:sigma}. 
If $T_n$ is low, it means the process stays close to its initial position during the period $\left[t_0,t_n\right]$ : it is likely that it is a subdiffusion. 
On contrary, if $T_n$ is large, it means the process goes away from its starting point as a superdiffusion does with high probability.
This new measure introduces an order in the diffusion processes solution of the SDE \eqref{eq:sde}. Then, it allows to classify them into the different classes of diffusion \textit{i-e} free diffusion, superdiffusion and subdiffusion. We want to build a test whose null hypothesis is that the trajectory comes from a Brownian motion, the gold standard process in biophysics. As a consequence $T_n$ must be a pivotal statistic under the hypothesis $H_0$ that is the trajectory is Brownian.
\begin{lemma}
Let $\hat{\sigma}_n$ be a consistent estimator of $\sigma$ such that the distribution of $\hat{\sigma}_n/\sigma$ does not depend on $\sigma$. If $(X_t)$ is a Brownian Motion, the distribution of $T_n$ does not depend on $\sigma$.
\label{lemma}
\end{lemma}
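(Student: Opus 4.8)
The plan is to exploit the scaling structure of Brownian motion under $H_0$ and to show that $T_n$ is invariant under a change of the diffusion coefficient. First I would use that, under $H_0$, the process solves \eqref{eq:sde} with $\mu=0$ and $\mathfrak{h}=1/2$, so that $X_{t_i}-X_{t_0}=\sigma\,(B_{t_i}-B_{t_0})$ for each $i$, where $B=(B^{1/2,1},B^{1/2,2})$ is a standard two-dimensional Brownian motion whose law carries no dependence on $\sigma$. Setting $W_t:=(X_t-X_{t_0})/\sigma$, the standardized path $W$ is a standard Brownian motion started at $0$, and its distribution does not depend on $\sigma$.

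Second, I would factor $\sigma$ out of each ingredient of $T_n$. By positive homogeneity of the Euclidean norm, $D_n=\sigma\max_{i=1,\dots,n}\|W_{t_i}\|_2=:\sigma\,\tilde D_n$, where $\tilde D_n$ is a functional of the standardized path $W$ alone. Dividing the numerator and denominator of \eqref{eq:stat:test:discret} by $\sigma$ then gives $T_n=\tilde D_n/\bigl(\sqrt{t_n-t_0}\,(\hat\sigma_n/\sigma)\bigr)$, so that the factor $\sigma$ cancels between $D_n$ and $\hat\sigma_n$ and $T_n$ is expressed entirely through $\tilde D_n$ and the ratio $\hat\sigma_n/\sigma$.

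Third comes the conclusion. The quantity $\tilde D_n$ depends only on $W$, whose law is free of $\sigma$, and the hypothesis supplies that $\hat\sigma_n/\sigma$ has a $\sigma$-free distribution. The delicate point is that these two quantities are \emph{not} independent: both are computed from the same observed trajectory, so I must control their joint law rather than the two marginals. The cleanest route is to note that the assumption on $\hat\sigma_n$ is naturally realized by a scale-equivariant estimator, for which $\hat\sigma_n(X)=\sigma\,\hat\sigma_n(W)$, i.e. $\hat\sigma_n/\sigma$ is itself a measurable functional of the standardized path $W$ (this holds for the estimators constructed in Section~\ref{subsec:sigma}, which are built from rescaled increments). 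Granting this, $T_n=\tilde D_n(W)/\bigl(\sqrt{t_n-t_0}\,\hat\sigma_n(W)\bigr)$ is a fixed measurable function of $W$ alone, and since the law of $W$ does not depend on $\sigma$, neither does the law of $T_n$.

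The main obstacle is precisely this joint-distribution step: the bare statement that the marginal distribution of $\hat\sigma_n/\sigma$ is $\sigma$-free does not by itself settle the law of the product, because $T_n$ couples $\hat\sigma_n/\sigma$ with $\tilde D_n$ through a common sample path. I would therefore make the scale-equivariance of $\hat\sigma_n$ explicit, reducing the claim to the one-line invariance argument above; everything else is routine homogeneity bookkeeping.
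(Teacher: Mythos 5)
Your argument is correct, and the comparison here is slightly unusual: the paper states Lemma \ref{lemma} without giving any proof at all (the appendix proves Theorem \ref{thm:conv:H0}, Propositions \ref{prop:sigma1} and \ref{prop1}, and the two power lemmas, but not this one), so the scaling computation you spell out -- standardize the path by $\sigma$, factor $\sigma$ out of $D_n$ by homogeneity of the norm, cancel it against $\hat\sigma_n$ -- is exactly the intended, but unwritten, argument. What you add beyond that is substantive and valid. As literally stated, the hypothesis constrains only the marginal law of $\hat\sigma_n/\sigma$, while $T_n$ in \eqref{eq:stat:test:discret} is determined by the joint law of $\left(D_n/\sigma,\hat\sigma_n/\sigma\right)$, and marginal pivotality alone does not force the law of the ratio to be free of $\sigma$. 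This is not a pedantic distinction: under $H_0$ the normalized increment vector (the ``shape'' of the path) is independent of $\hat\sigma_{1,n}/\sigma$, and exploiting this one can construct a consistent estimator that reads off the scale through terms like $\sin\left(a\log\hat\sigma_{1,n}\right)$ and couples them to the scale-invariant statistic $D_n/\hat\sigma_{1,n}$, in such a way that $\hat\sigma_n/\sigma$ keeps a $\sigma$-free marginal under $H_0$ while the law of $T_n$ genuinely varies with $\sigma$; so some strengthening of the hypothesis is actually needed, not merely convenient. Your choice of strengthening -- scale equivariance, $\hat\sigma_n(X_{t_0}+\sigma W)=\sigma\,\hat\sigma_n(W)$, so that $\hat\sigma_n/\sigma$ is itself a functional of the standardized path -- is the right one: it is satisfied by both estimators \eqref{eq:sigma_standard} and \eqref{eq:sigma_2diff} of Section \ref{subsec:sigma}, since each is a positively homogeneous function of the (first or second order) increments, and under it your conclusion is immediate, since $T_n$ becomes a fixed measurable functional of the standardized path $W$, whose law does not involve $\sigma$. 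This is also precisely how the lemma is used elsewhere in the paper, so your reading repairs the statement without weakening any downstream result.
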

Let $q_n(\alpha)$ the quantile of $T_n$ of order $\alpha \in (0,1)$ when $(X_t)$ is a Brownian motion. From Lemma \ref{lemma}, $q_n(\alpha)$ does not depend on $\sigma$.

\subsection{Two hypothesis test procedures derived from the test statistic}
First we define $\phi_{1,\alpha}$ the hypotheses test associated to $H_{0}$ versus $H_{1}$ at level $\alpha \in (0,1).$ The procedure $\phi_{1,\alpha}$ is defined through its critical region,
\begin{equation}
\mathcal{R}_{1,\alpha}=\left\{T_n<q_n(\alpha) \right\},
\label{eq:critic_reg_c_1}
\end{equation}
as the following,
\begin{align*}
\phi_{1,\alpha}(\mathbb{X}_n)=
\left\{
\begin{array}{ll}
1 & \text{if } \mathbb{X}_n\in\mathcal{R}_{1,\alpha}\\
0 & \text{otherwise.}
\end{array}
\right.
\end{align*}

Then $T_n$ has probability $\alpha$ to lie in the critical region \eqref{eq:critic_reg_c_1}.
According to Lemma \ref{lemma}, the level of the test $\phi_{1,\alpha}$ is $\alpha$,
\begin{equation}
\sup_{\sigma>0}P_{1/2,0,\sigma}\left(T_n<q_n(\alpha) \right) = \alpha.
\end{equation}

In a similar way, we can perform the test $\phi_{2,\alpha}$ by replacing subdiffusion by superdiffusion in the alternative hypothesis. 
The associated critical region is :
\begin{equation}
\mathcal{R}_{2,\alpha}=\left\{T_n>q_n(1-\alpha) \right\}.
\label{eq:critic_reg_c_2}
\end{equation}

\subsection{A three-decision test procedure}
From the two tests $\phi_{1,\alpha/2}$ and $\phi_{2,\alpha/2},$ we define a new procedure $\phi$ as follows,
\begin{align}
\left\{
\begin{array}{l}
\text{we decide } H_1 \text{ if } \mathbb{X}_n\in\mathcal{R}_{1,\alpha/2},\\
\text{we decide } H_2 \text{ if } \mathbb{X}_n\in\mathcal{R}_{2,\alpha/2},\\
\text{we do not reject } H_0 \text{ otherwise.}
\end{array}
\right.
\label{three-dec-test}
\end{align}
This procedure is well defined since the intersection of the critical region $\mathcal{R}_{1,\alpha}$ and $\mathcal{R}_{2,\alpha}$ is empty.
This procedure is a three-decision test procedure and admits three kinds of errors, see Table \ref{tab:test:error:type3}.

The first kind of errors is to reject the null hypothesis $H_0$ while $H_0$ is actually true.
The probability that this error occurs is the level of the test which is defined as,
\begin{equation}
\sup_{\sigma>0}\mathbb{E}_{1/2,0,\sigma}\left(\phi_{1,\alpha}+\phi_{2,\alpha}\right) = \alpha.
\end{equation}
We only control the occurrence of this first kind of error.
Then we draw attention that acceptance of $H_0$ "$(X_t)$ is a free diffusion" does not necessarily demonstrate that $H_0$ is true. 
It only means that data does not show any evidence against the null hypothesis.
At the end, we reject this assumption in direction to one of the alternatives at level $\alpha/2.$ 

The second type of errors occurs when we do not reject the null hypothesis while one of the alternatives is true.

The last type of errors is to reject the null hypothesis in favour to a wrong alternative. In the literature of three-decision test such an error is called a Type III error, see for example \cite{rasch2012hypothesis} and references therein.


\begin{table}
\small
\centering
\caption{The three kinds of error in a three-decision test procedure.} 
\begin{tabular}{|c|ccc|}
\hline
& \multicolumn{3}{|c|}{Decision}\\
& \multirow{2}{22mm}{\centerline{Do not}\\ \centerline{Reject $H_0$}}  & \multirow{2}{22mm}{\centerline{Decide $H_1$}} & \multirow{2}{22mm}{\centerline{Decide $H_2$}}\\
Truth &  & &\\
\hline
$H_0$ True & No error & Type I & Type I\\
$H_1$ True & Type II & No error & Type III\\
$H_2$ True & Type II & Type III & No error\\
\hline
\end{tabular}
\label{tab:test:error:type3}
\end{table}

\subsection{Choosing the estimator of $\sigma$}
\label{subsec:sigma}
Ideally, we would like to find an estimator of $\sigma$ which is consistent according to the \textit{large-sample scheme} under the hypotheses $H_0$, $H_1$ and $H_2,$ and satisfies the assumption that the distribution of $\hat{\sigma}_n/\sigma$ is free of $\sigma$ under $H_0.$
However, the \textit{large-sample scheme} is not favourable to get an estimator with such properties. 
For instance, \cite{florens1989approximate} shows that the naive maximum likelihood estimator for the drift parameter has an asymptotic bias of the order of lag time $\Delta.$
Then, the \textit{high-frequency scheme} and the \textit{rapidly increasing design} turns out to be more convenient to provide consistent estimators. In fact, in the limit, these schemes correspond to the situation in which we have a continuous observation of the process on the time interval of observation.
\cite{jiang1997nonparametric} propose non parametric estimators of both the drift and the diffusion coefficient. The consistency of these estimators is proven under the {high-frequency scheme} only. 
Therefore, in this section, we discuss about the estimation of the diffusion coefficient under the \textit{large-sample asymptotic}.

The first proposition to estimate $\sigma$ may be :
\begin{equation}
\hat{\sigma}_{1,n}^2=\frac{1}{2n\Delta}\sum\limits_{j=1}^n\|X_{t_j}-X_{t_{j-1}}\|_2^2
\label{eq:sigma_standard}
\end{equation}
Even if the estimator \eqref{eq:sigma_standard} is strongly consistent under the \textit{high-frequency scheme} for every process $(X_t)$ solution of \eqref{eq:sde} \cite[Lemma~4.2, p~212]{basawa1980statistical}, Proposition \ref{prop:sigma1} tells us that it is not the case under the \textit{large-sample scheme}.

\begin{prop}\label{prop:sigma1}
\quad \newline \vspace{-0.5cm}
\begin{itemize}
\item Under $H_0,$ $\hat{\sigma}_{1,n}$ is strongly consistent and the distribution of $\hat{\sigma}_{1,n}/\sigma$ is free of $\sigma.$
\item If $(X_t)$ is an Ornstein-Uhlenbeck process \eqref{eq:sde_OU}, $\hat{\sigma}_{1,n}^2/\sigma^2$ converges in probability to ${(1-e^{-\lambda \Delta})}/({\lambda \Delta}).$ 
\item If $(X_t)$ is a \BRD \eqref{eq:sde_dir}, $\hat{\sigma}_{1,n}^2/\sigma^2$ converges almost surely to ${\Delta \|v\|_2^2}/(2\sigma^2) + 1.$
\item If $(X_t)$ is a fractional Brownian motion \eqref{eq:sde_FBM}, $\hat{\sigma}_{1,n}^2/\sigma^2$ converges almost surely to ${\Delta^{2\mathfrak{h}-1}}.$ 
\end{itemize}
\end{prop}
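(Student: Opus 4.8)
The four assertions share one engine: a law of large numbers for the squared increments. Writing $Y_j=X_{t_j}-X_{t_{j-1}}=(Y^1_j,Y^2_j)$, I would first record
\[
\hat{\sigma}_{1,n}^2=\frac{1}{2\Delta}\,\frac{1}{n}\sum_{j=1}^n\norme{Y_j}_2^2,\qquad \norme{Y_j}_2^2=(Y^1_j)^2+(Y^2_j)^2 .
\]
Because the driving motions $B^{\mathfrak{h},1}$ and $B^{\mathfrak{h},2}$ are independent, the two coordinate sequences $(Y^1_j)_j$ and $(Y^2_j)_j$ are independent, so it suffices, in each model, to identify the law of a single coordinate sequence, apply a limit theorem to $\tfrac1n\sum_j (Y^i_j)^2$, and add the two coordinatewise limits. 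The whole statement then reduces to (a) identifying the law of $(Y^i_j)_j$, (b) computing $m_i:=\mathbb{E}[(Y^i_1)^2]$, and (c) justifying the mode of convergence, after which $\hat{\sigma}_{1,n}^2\to\tfrac{1}{2\Delta}(m_1+m_2)$.

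Under $H_0$ and under the \BRD the increments are i.i.d.\ Gaussian, so these two cases follow from the ordinary strong law. Under $H_0$ one has $Y^i_j\sim\mathcal{N}(0,\sigma^2\Delta)$ independently, whence $\norme{Y_j}_2^2/(\sigma^2\Delta)\sim\chi^2_2$ and $\hat{\sigma}_{1,n}^2/\sigma^2$ is distributed as $\tfrac{1}{2n}\chi^2_{2n}$; this is manifestly free of $\sigma$ (the pivotal property) and converges to $1$ almost surely by the strong law (strong consistency). For the \BRD I would write $Y^i_j=v_i\Delta+\sigma\,\xi^i_j$ with $\xi^i_j\sim\mathcal{N}(0,\Delta)$ i.i.d., expand the square, and apply the strong law term by term: the cross term has mean zero and vanishes, yielding $\tfrac1n\sum_j\norme{Y_j}_2^2\to \norme{v}_2^2\Delta^2+2\sigma^2\Delta$ almost surely, i.e. $\hat{\sigma}_{1,n}^2/\sigma^2\to \Delta\norme{v}_2^2/(2\sigma^2)+1$.

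For the \fBm the increments $(Y^i_j)_j$ form a fractional Gaussian noise, a stationary Gaussian sequence with $\mathbb{E}[(Y^i_1)^2]=\sigma^2\Delta^{2\mathfrak{h}}$. Here I would invoke Birkhoff's ergodic theorem rather than the i.i.d.\ strong law; the point needing care is ergodicity, which for a stationary Gaussian sequence holds as soon as its spectral measure has no atom, equivalently as soon as its autocovariance tends to $0$ (Maruyama's theorem). Since the fractional Gaussian noise autocovariance vanishes at infinity even in the long-range-dependent regime $\mathfrak{h}>1/2$ (where it is not summable), ergodicity holds throughout $0<\mathfrak{h}<1$, and the ergodic theorem gives $\tfrac1n\sum_j(Y^i_j)^2\to\sigma^2\Delta^{2\mathfrak{h}}$ almost surely, hence $\hat{\sigma}_{1,n}^2/\sigma^2\to\Delta^{2\mathfrak{h}-1}$ almost surely.

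The \OU case is the delicate one and is the reason the claim there is only convergence in probability. Solving \eqref{eq:sde_OU} explicitly, I would split each increment $Y^i_j$ into a deterministic transient, proportional to $(X^i_0-\theta_i)(e^{-\lambda_i t_j}-e^{-\lambda_i t_{j-1}})$ and decaying geometrically in $j$, plus the increment of the stationary \OU, a stationary Gaussian sequence with variance $\tfrac{\sigma^2}{\lambda_i}(1-e^{-\lambda_i\Delta})$ and exponentially decaying autocovariance. Because of the transient the increment sequence is only asymptotically stationary, so Birkhoff does not apply directly; the plan is instead an $L^2$ argument. I would show the averaged transient contribution vanishes, compute $\mathbb{E}[\hat{\sigma}_{1,n}^2]\to\tfrac{\sigma^2}{\lambda\Delta}(1-e^{-\lambda\Delta})$ (taking $\lambda_1=\lambda_2=\lambda$ as in the statement), and bound $\operatorname{Var}(\hat{\sigma}_{1,n}^2)$ using the exponential decay of the correlations of the stationary increments, so that the variance tends to $0$; Chebyshev's inequality then gives convergence in probability to $(1-e^{-\lambda\Delta})/(\lambda\Delta)$. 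I expect this variance control under a non-stationary start --- equivalently, exploiting the geometric ergodicity and exponential mixing of the \OU process --- to be the main technical obstacle of the proof.
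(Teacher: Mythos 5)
Your proposal is correct, and for $H_0$ and the Brownian motion with drift it is essentially the paper's own argument (i.i.d.\ squared increments plus the strong law of large numbers; your explicit representation $\hat{\sigma}_{1,n}^2/\sigma^2\sim \chi^2_{2n}/(2n)$ is a nice way of making the pivotal claim concrete). For the other two models you take a genuinely different route. For the fractional Brownian motion, the paper rescales by self-similarity and cites a proposition of Coeurjolly on quadratic variations of fBm to obtain the almost sure limit, whereas you apply Birkhoff's ergodic theorem to the fractional Gaussian noise, justifying ergodicity via Maruyama's theorem (autocovariance tending to zero); both are valid, yours is more self-contained, and the paper's citation silently absorbs exactly the ergodicity point you make explicit. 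For the Ornstein--Uhlenbeck process, the paper assumes the process is in its stationary regime, computes $\mathbb{E}\|X_{t+\Delta}-X_t\|^2=2\sigma^2(1-e^{-\lambda\Delta})/\lambda$ by conditioning and integrating against the stationary law, and concludes by a lemma of Bibby and S{\o}rensen for stationary diffusions, whereas you allow an arbitrary initial condition, split off a geometrically decaying transient, and run an $L^2$/Chebyshev argument. Your version is more general, and the variance bound you defer is routine rather than a real obstacle: by Isserlis' (Wick's) formula, $\mathrm{Cov}(Z_j^2,Z_k^2)=2\,\mathrm{Cov}(Z_j,Z_k)^2$ for jointly centered Gaussian variables, so the exponentially decaying covariance of the stationary increments gives $\mathrm{Var}(\hat{\sigma}_{1,n}^2)=O(1/n)$, and the transient cross-terms are geometrically summable. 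Note finally that under the paper's stationary-start assumption your Maruyama--Birkhoff argument applies verbatim to the OU increments (exponentially decaying autocovariance) and would even upgrade the Ornstein--Uhlenbeck conclusion to almost sure convergence, which is stronger than the convergence in probability stated and delivered by the paper.
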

A proof of Proposition \ref{prop:sigma1} is given in Appendix A.2 (Supplementary Material).
Proposition \ref{prop:sigma1} states that $\hat{\sigma}_{1,n}$ is adequate to our procedure under the null hypothesis.
However $\hat{\sigma}_{1,n}$ is asymptotically biased under some alternatives. 
Notice that if $(X_t)$ is an Ornstein-Uhlenbeck process \eqref{eq:sde_OU}, then $\hat{\sigma}_{1,n}^2$ underestimates $\sigma^2$ in average since $(1-e^{-x})/x<1$ for $x>0$. Then $T_n$ might be overvalued with this estimator, increasing Type II or type III error rate in our procedure.
If $(X_t)$ is a \BRD \eqref{eq:sde_dir}, $\hat{\sigma}_{1}^2$ overestimates  $\sigma^2$ in average. 
Then $T_n$ might be overvalued with this estimator, increasing Type II or type III error rate.
Similarly, if $(X_t)$ is a fractional Brownian motion \eqref{eq:sde_FBM}, $\hat{\sigma}_{1}^2$ underestimates $\sigma^2$ if $\mathfrak{h}<1/2$, and overestimates $\sigma^2$ if $\mathfrak{h}<1/2$.

The second suggestion to estimate $\sigma$ may be based on the second order differences rather than the first order differences,
\begin{equation}
\hat{\sigma}_{2,n}^2=\frac{1}{2n\Delta}\sum\limits_{j=1}^{n-1}\|(X_{t_{j+1}}-X_{t_j})-(X_{t_j}-X_{t_{j-1}})\|_2^2.
\label{eq:sigma_2diff}
\end{equation}
As $\hat{\sigma}^2_{1,n},$ $\hat{\sigma}_{2,n}^2$ fulfils the assumption of Lemma \ref{lemma} under $H_0$.
This estimator has the advantage of decreasing the bias under some alternatives. For instance it removes the bias in the case of the \BRD.

\subsection{Approximation of the distribution of the statistic under the null hypothesis and asymptotic behaviour of our procedure}

Theorem \ref{thm:conv:H0} gives the asymptotic behaviour of our procedure under the null hypothesis. 

\begin{thm}\label{thm:conv:H0} 
Let $(X_t)$ be a Brownian Motion on $\mathbb{R}^2.$
Let $\hat{\sigma}_n$ be a consistent estimator of the diffusion parameter $\sigma$ of $(X_t).$
The test statistic $T_n$ converges in distribution to $S_0=\sup_{0\leq s \leq 1}\left\|W_s\right\|_2$ as $n \rightarrow \infty$. Here $(W_t)$ is a standard 2D Brownian motion that is the Brownian motion of variance $\mathbf{I}_2$ and initialization $W_0=(0,0)^\top$.
\end{thm}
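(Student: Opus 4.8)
The plan is to reduce $T_n$ to an explicit functional of a standard Brownian motion and then combine an exact scaling identity with Slutsky's theorem. First I would fix the starting point: under $H_0$ the process satisfies $X_{t_i} - X_{t_0} = \sigma\,\tilde B_{i\Delta}$, where $\tilde B_s = B_{t_0+s}-B_{t_0}$ is a standard $2$D Brownian motion issued from the origin (the increments of $X$ have drift zero and $\mathfrak{h}=1/2$). Since $t_n - t_0 = n\Delta$, the statistic factorises as
\begin{equation*}
T_n = \frac{\sigma}{\hat\sigma_n}\cdot\frac{1}{\sqrt{n\Delta}}\max_{i=1,\dots,n}\|\tilde B_{i\Delta}\|_2.
\end{equation*}
Because $\hat\sigma_n$ is consistent, $\sigma/\hat\sigma_n \to 1$ in probability, so by Slutsky's theorem it suffices to prove that the rescaled discrete maximum $M_n := (n\Delta)^{-1/2}\max_{i\le n}\|\tilde B_{i\Delta}\|_2$ converges in distribution to $S_0$.

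The key observation is the self-similarity of Brownian motion. For each fixed $n$, writing $i\Delta = n\Delta\cdot(i/n)$ and using $\tilde B_{cs}/\sqrt c \stackrel{d}{=} W_s$ as processes (with $c = n\Delta$), the finite-dimensional vector satisfies $\big(\tilde B_{i\Delta}/\sqrt{n\Delta}\big)_{i=1}^n \stackrel{d}{=} \big(W_{i/n}\big)_{i=1}^n$ for a standard $2$D Brownian motion $W$. Applying the measurable map $v\mapsto\max_i\|v_i\|_2$ to both sides gives the distributional identity
\begin{equation*}
M_n \stackrel{d}{=} \max_{i=1,\dots,n}\|W_{i/n}\|_2,
\end{equation*}
valid for every $n$. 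Now I would fix a single realisation of $W$: since $s\mapsto\|W_s\|_2$ is continuous on the compact interval $[0,1]$ and the grid $\{i/n:1\le i\le n\}$ becomes dense, the grid maximum converges to $\sup_{0\le s\le1}\|W_s\|_2 = S_0$ almost surely, hence also in distribution. Having the same law as a sequence converging in distribution to $S_0$, the variables $M_n$ converge in distribution to $S_0$ as well.

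Combining the two displays with Slutsky's theorem yields $T_n = (\sigma/\hat\sigma_n)\,M_n \xrightarrow{d} 1\cdot S_0 = S_0$, which is the claim. The one point requiring care is the passage from the maximum over the fixed-spacing grid $\{i\Delta\}$ to the supremum of a continuous Brownian path: I expect this to be the main obstacle, and the device above avoids a pathwise coupling (which would fail because the rescaled processes $W^{(n)}_s = \tilde B_{n\Delta s}/\sqrt{n\Delta}$ change with $n$ and do not converge pathwise) by exploiting the \emph{exact} Brownian scaling, so that the discrete-to-continuous limit is taken on one fixed copy of $W$ where uniform continuity of the path applies directly. It is worth noting that the argument uses only the finite-dimensional law of $X$ under $H_0$ together with the consistency of $\hat\sigma_n$, so no functional central limit theorem is needed.
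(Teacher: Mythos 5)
Your proof is correct, but it takes a genuinely different route from the paper's. The paper reduces, via the same Slutsky step you use, to the statistic with known $\sigma$, then writes the discrete maximum as the supremum of the rescaled partial-sum process $W^{(n)}_t = n^{-1/2}R_{\lfloor nt\rfloor}$, invokes Donsker's invariance principle to get weak convergence of $W^{(n)}$ to the Wiener measure on the space of continuous functions, and finishes with the continuous mapping theorem applied to $x\mapsto\sup_{t\in[0,1]}\|x(t)\|_2$. You avoid the functional CLT entirely: by exact Brownian self-similarity, for each fixed $n$ the vector $\bigl(\tilde B_{i\Delta}/\sqrt{n\Delta}\bigr)_{i=1}^n$ has exactly the law of $\bigl(W_{i/n}\bigr)_{i=1}^n$ for a single fixed standard Brownian motion $W$, so the problem becomes the almost-sure convergence of the grid maxima of one fixed continuous path to its supremum, which follows from path continuity and density of the grids; convergence in distribution then transfers back because it depends only on the marginal laws. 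Your argument is more elementary (no weak-convergence theory on path space) and correctly identifies and sidesteps the trap that the rescaled processes change with $n$, but it leans essentially on the observations being \emph{exactly} Brownian under $H_0$; the paper's Donsker-based argument is heavier machinery, yet it is the version that would survive if the standardized increments were merely i.i.d.\ with finite variance rather than exactly Gaussian (for instance under measurement noise or a non-Gaussian driving process), since Donsker's theorem requires no Gaussianity. Both proofs are complete; yours is a clean alternative.
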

A proof of Proposition \ref{thm:conv:H0} is given in Appendix A.1 (Supplementary Material).
The limit distribution of the test statistic under $H_0$ admits an analytical form \cite[see][Formulae.1.1.4, p.~280]{borodin1996handbook}:
\begin{equation*}
x\in(0,+\infty)\to\sum\limits_{k=1}^\infty \frac{2 e^{-j_{0,k}^2/(2x^2)}}{j_{0,k}J_1(j_{0,k})},
\end{equation*}
where $x\geq 0$, $J_\nu$ the Bessel function of order $\nu$ and $0<j_{\nu,1}<j_{\nu,2}<\dots$ the positive zeros of $J_\nu$. 
Replacing the quantiles $q_n(\alpha)$ by the quantiles of $S_0$ in our test procedure provides us a test of asymptotic level $\alpha.$

Furthermore, Proposition \ref{prop1} gives the asymptotic behaviour of the test statistic under parametric alternatives when the estimator $\hat{\sigma}_{1,n}$ is considered (see Appendix A.3 (Supplementary Material)for a proof).
More generally, as long as the estimator $\hat\sigma_n$ of the diffusion coefficient is such that $\hat\sigma_n/\sigma$ converges in probability to a positive constant whatever the dynamic of $(X_t),$ then Proposition \ref{prop1} holds.

\begin{prop}\label{prop1}
Assume that we consider the estimator \eqref{eq:sigma_standard} in our procedure \eqref{eq:stat:test:discret}.
\begin{itemize}
\item If $(X_t)$ is an Ornstein-Uhlenbeck process \eqref{eq:sde_OU}, $T_n$ converges in probability to $0.$
\item If $(X_t)$ is a fractional Brownian motion \eqref{eq:sde_FBM} with  $0<\mathfrak{h}<1/2$, $T_n$ converges in probability to $0.$
\item If $(X_t)$ is a fractional Brownian motion \eqref{eq:sde_FBM} with  $1/2<\mathfrak{h}<1,$ $T_n$ converges in probability to $+\infty.$
\item If $(X_t)$ is a \BRD \eqref{eq:sde_dir}, $T_n$ converges in probability to $+\infty.$
\end{itemize}
\end{prop}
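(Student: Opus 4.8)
The plan is first to dispose of the denominator uniformly across all four models. Since $t_n-t_0=n\Delta$ and Proposition \ref{prop:sigma1} gives, in each case, a deterministic positive limit $c$ for $\hat\sigma_{1,n}^2/\sigma^2$ (almost surely, except for the \OU case where it is in probability), the denominator $\sqrt{(t_n-t_0)\hat\sigma_{1,n}^2}$ is, on an event of probability tending to one, comparable to $\sigma\sqrt{c}\,\sqrt{n\Delta}$. Everything then reduces to comparing the growth of $D_n=\max_{i\le n}\|X_{t_i}-X_{t_0}\|$ with $\sqrt n$: to obtain $T_n\to0$ I would exhibit an upper bound on $D_n$ of order $o_p(\sqrt n)$, and to obtain $T_n\to+\infty$ a lower bound on $D_n$ of order strictly larger than $\sqrt n$.

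For the two \fBm cases and the \BRD case I would exploit exact scaling. Taking $X_{t_0}$ as the origin and using that \fBm has stationary increments and is self-similar, $D_n\le\sigma\sup_{0\le t\le n\Delta}\|B^{\mathfrak h}_t\|$, and this supremum is stochastically dominated by $(n\Delta)^{\mathfrak h}M$ with $M=\sup_{0\le s\le1}\|B^{\mathfrak h}_s\|<\infty$ almost surely. Together with $\hat\sigma_{1,n}^2/\sigma^2\to\Delta^{2\mathfrak h-1}$, this shows that, up to a constant, $T_n$ is stochastically dominated by $n^{\mathfrak h-1/2}M$, which tends to $0$ when $\mathfrak h<1/2$. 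When $\mathfrak h>1/2$ I would instead bound from below by the endpoint, $D_n\ge\|X_{t_n}-X_{t_0}\|=\sigma\|B^{\mathfrak h}_{n\Delta}\|\stackrel{d}{=}\sigma(n\Delta)^{\mathfrak h}\|B^{\mathfrak h}_1\|$, so that $T_n$ is bounded below by a constant times $n^{\mathfrak h-1/2}\|B^{\mathfrak h}_1\|$; since $\|B^{\mathfrak h}_1\|$ has no atom at $0$ and $n^{\mathfrak h-1/2}\to+\infty$, this forces $T_n\to+\infty$. For the \BRD, the same endpoint bound gives $D_n\ge\|v\|_2\,n\Delta-\sigma\|W_{n\Delta}\|$ with $\|W_{n\Delta}\|=O_p(\sqrt n)$, so the numerator grows like $n$ while the denominator grows like $\sqrt n$, again yielding $T_n\to+\infty$.

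The delicate case, and the main obstacle, is the \OU process, where self-similarity is unavailable and one must genuinely use that the process is confined and recurrent to show that the maximal excursion $D_n$ stays negligible against $\sqrt n$ despite the diverging number of points. Here I would use the explicit Gaussian representation: per coordinate, $X^c_{t_i}-X^c_{t_0}$ splits into a bounded transient term and a centred Gaussian stochastic-integral term $Y^c_i$ whose variance is bounded \emph{uniformly in $i$} by $\sigma^2/(2\lambda_c)$. A union bound over $i$ with the Gaussian tail estimate then gives $P(\max_{i\le n}|Y^c_i|>\epsilon\sqrt n)\le 2n\exp(-c'\epsilon^2 n)\to0$, whence $D_n=o_p(\sqrt n)$; combined with $\hat\sigma_{1,n}^2/\sigma^2\to(1-e^{-\lambda\Delta})/(\lambda\Delta)>0$ this gives $T_n\to0$. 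The only real point of care is the uniform-in-$i$ variance bound on $Y^c_i$, which is exactly what keeps the union bound summable and the maximum of order $\sqrt{\log n}$.
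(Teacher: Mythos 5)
Your proof is correct. For three of the four cases it follows essentially the paper's own route: dispose of the denominator via Proposition \ref{prop:sigma1} (since $\hat\sigma_{1,n}/\sigma$ tends to a positive constant, one may argue as if $\sigma$ were known, up to a constant factor), then use the endpoint lower bound $D_n\geq\|X_{t_n}-X_{t_0}\|$ for the two superdiffusive cases and self-similarity for the fractional Brownian cases. For subdiffusive fBm the paper additionally invokes \citet{zaidi2003smoothness} to get absolute continuity of the law of $\sup_{t\in[0,1]}|Z^{i,(n)}_t|$ and hence tightness, whereas you only use that this supremum is almost surely finite by path continuity; your version is marginally more elementary and equally sufficient, since a single random variable (here, the common law of the rescaled suprema) is automatically tight. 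The genuine difference is the Ornstein--Uhlenbeck case. The paper assumes a stationary start, standardizes $\xi^i_k=(X^i_{t_k}-\theta_i)\sqrt{2\lambda}/\sigma$ into a stationary Gaussian sequence, and invokes the extreme-value theorem for stationary normal sequences \citep[Theorem 4.3.3]{leadbetter1983extremes} to conclude that $\max_{k\leq n}\xi^i_k$ grows like $\sqrt{2\log n}=o(\sqrt{n\Delta})$. You instead split each coordinate, using \eqref{app:eq:ou_sde_int2}, into the transient term $(X^c_{t_0}-\theta_c)(e^{-\lambda_c(t_i-t_0)}-1)$, which is $O_p(1)$, plus the stochastic integral $Y^c_i=\sigma\int_{t_0}^{t_i}e^{-\lambda_c(t_i-u)}dB^c_u$, whose variance is bounded by $\sigma^2/(2\lambda_c)$ \emph{uniformly in} $i$, and you close with a union bound and the Gaussian tail, $P\bigl(\max_{i\leq n}|Y^c_i|>\epsilon\sqrt{n}\bigr)\leq 2n\exp(-\lambda_c\epsilon^2 n/\sigma^2)\to 0$. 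This is more elementary (no extreme-value machinery, no use of the dependence structure of the sequence beyond marginal variances) and slightly more general, since it does not require the stationary initialization; what you give up is only the sharp $\sqrt{2\log n}$ asymptotics for the maximum, which is irrelevant here because any $o_p(\sqrt{n})$ bound suffices for the conclusion $T_n\to 0$.
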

\noindent Note that Theorem \ref{thm:conv:H0} and Proposition \ref{prop1} allow us to control the error rates of type II and type III under parametric alternatives: the associated error rates converges to $0$ with $n.$ 

However, as in practice $n$ may be small, the asymptotic approximation of the quantiles of $T_n$ may not be accurate. 
Then the level of the test is no longer $\alpha.$ 
Since we are able to draw a sample from the distribution of $T_n$ under $H_0$ (see Algorithm 1 in Appendix B (Supplementary Materials)), we propose a Monte Carlo estimate of the quantile $q_n(x)$, $0<x<1$. 
This estimate is defined as the $[xN]^{\text{th}}$ order statistic, $q^{(N)}_n(x),$ of the sample $(T^{(1)}_n,\dots,T^{(N)}_n)$. 
Table \ref{tab:quantiles} shows that there is a significant difference between asymptotic and non asymptotic quantiles. 
As expected, as $n \rightarrow \infty$, $q_{n}(\alpha)$ converges to $q(\alpha)$.

\begin{table}[t]
\caption{Estimation of the quantiles of order $\alpha/2$ and $1-\alpha/2$ ($\alpha=5\%$) for different trajectory lengths $n$, using  Algorithm 1 in Appendix B (Supplementary Materials) with $N=1\,000\,001$. 
}
\label{tab:quantiles}
\small
\centering
\begin{tabular}{ccccc}
\hline
Estimated quantiles & \multicolumn{4}{c}{ Trajectory size}\\
quantile order & 10 & 30 & 100 & asymp\\ \hline
2.5\% & 0.725 & 0.754  & 0.785 & 0.834\\ 
97.5\% & 2.626 & 2.794  & 2.873 & 2.940 \\
\hline
\end{tabular}
\end{table}

In dealing with a test, we can also be interested in computing the $p$-value. The $p$-value of the test $H_0$ \textit{vs} $H_1$ (subdiffusion as the alternative) is defined as :
\begin{equation}
p_{1,n}=F_n(T_n),
\label{eq:p-value_conf}
\end{equation}
where $F_n$ denotes the cumulative distribution function (cdf) of $T_n$ under $H_0$. The $p$-value of the test $H_0$ \textit{vs} $H_2$ (superdiffusion as the alternative) is defined as :
\begin{equation}
p_{2,n}=1-F_n(T_n).
\label{eq:p-value_dir}
\end{equation}
Testing the hypothesis $H_0$ \textit{vs} the hypotheses $H_1$ or $H_2$ is more tricky as we use a two-sided test with a non-symmetric distribution. In this case we can define the $p$-value as :
\begin{equation}
p_n=2\min\left\{ p_{1,n},p_{2,n}\right\}.
\label{eq:p-value_tot}
\end{equation}
Doubling the lowest one-tailed \Pv can be seen as a correction for carrying out two one-tailed tests.\newline
We estimate $F_n$ with the standard empirical distribution function estimated by Monte Carlo simulations using Algorithm 1 Appendix B (Supplementary Materials).
\begin{equation}
\hat{F}_n(x)=N^{-1}\sum\limits_{i=1}^N
\mathbf{1}(T^{(i)}_n\leq x).
\end{equation}
Then we estimate the $p$-value \eqref{eq:p-value_tot} substituting $\hat{F}_n$ to $F_n$.

\section{Multiple test procedure for a collection of trajectories}
\label{sec:multiple_test}
Trackers compute a collection of particle trajectories from a sequence of images.
Then, it is desirable to decide the modes of mobility for a collection of particle trajectories. 
From now, we consider a collection $\mathcal{X}_m$ of $m$ trajectories which are simultaneously observed.  
We denote by $\mathbb{X}_{n_k}^{(k)}$  the observations associated to the $k^{\text{th}}$ particle :         
\begin{align*}         
\mathbb{X}_{n_k}^{(k)}&=\left(X_{t_0}^{(k)},\ldots, X_{t_{n_k}}^{(k)}\right),\qquad k=1,\ldots,m\\         
\mathcal{X}_m &=\left\lbrace\mathbb{X}_{n_k}^{(k)},\  k=1,\ldots,m.\right\rbrace         
\end{align*}         
In this section, we denote by $\mathbb{P}$ the probability distribution of the $m$-uplet stochastic processes $\left( (X_t^{(k)}),\ k=1\ldots m\right)$ and by $\mathbb{E}$ its associated expectation.         
We assume that the observed trajectories are independent, that means $\mathbb{P}$ belongs to the tensorial product of probabilities $\mathcal{P},$ (defined in Section \ref{sec:diffusion}) $\mathbb{P}\in\mathcal{P}^{\otimes m}.$        
For all trajectories $k=1\ldots m,$ we derive our trichotomy hypothesis test procedure : $H_{0}^{(k)}$ "$(X_t^{(k)})$ is a free diffusion" versus $H_{1}^{(k)}$ "$(X_t^{(k)})$ is a subdiffusion" or $H_{2}^{(k)}$ "$(X_t^{(k)})$ is a superdiffusion".
We are faced with the problem of simultaneous tests when the rejections of null hypotheses $H_{0}^{(k)}$ are accompanied by claims of the direction of the alternative ($H_{1}^{(k)}$ or $H_{2}^{(k)}$).
In this setup, multiple test procedures are preferable than single test procedures.
Indeed, applying the procedure at level $\alpha$ for each trajectory produces in average a number of $m\alpha$ type I errors. 
A multiple testing procedure aims to control the number of false discoveries.
We refer the reader to \cite{shaffer1995multiple,roquain2011type,grandhi2015multiple} 
for a review.

A multiple testing procedure of $m$ null hypotheses against two alternative hypotheses is a rule $\mathcal{R}_1(\mathcal{X}_m)\times \mathcal{R}_2(\mathcal{X}_m),$ where  $\mathcal{R}_1(\mathcal{X}_m)$ and $\mathcal{R}_2(\mathcal{X}_m)$ are disjoint subsets of $\{H_0^{(1)},\ldots H_0^{(m)}\}.$ 
For $i=1,2,$ $\mathcal{R}_i(\mathcal{X}_m)$ is the set of the rejected  hypotheses $H_{0}^{(k)}$ to the benefit of the alternative $H_{i}^{(k)}.$
We may commit three kinds of errors in such a multiple testing procedure.
Let us introduce the following notations before listing these errors.
For a given $\mathbb{P}\in\mathcal{P}^{\otimes m},$ we denote by $\mathcal{I}(\mathbb{P})$ the subset of indexes $\{1,\ldots m\}$ for which the hypothesis $(H_{0}^{(k)})$ is actually true and by $m_0(\mathbb{P})$ the unknown cardinal of the set $\mathcal{I}(\mathbb{P}).$
We denote by $R=R_1+R_2$ the observed number of null hypotheses which are rejected by the multiple testing procedure.
Table \ref{tab:test:multiple:error} summaries the number of errors which may occur following a multiple testing procedure.
\begin{itemize}
\item We make a type I error on $H_0^{(k)}$ when we reject $H_0^{(k)}$ while it is a true null hypothesis. In this case, $k$ belongs to the set $\mathcal{I}(\mathbb{P})\cap (\mathcal{R}_1(\mathcal{X}_m)\cup\mathcal{R}_2(\mathcal{X}_m) ).$ The number of errors of first kind is $V=V_1+V_2.$
\item Type II error occurs when we do not reject a null hypothesis $H_{0,k}$ while $H_{0,k}$ is false ($k\notin\mathcal{I}(\mathbb{P})$). The number of errors of second kind is $T=T_1+T_2.$ 
\item The type III errors are directional errors : the index $k\notin\mathcal{I}(\mathbb{P})$ is correctly rejected ($k\in \mathcal{R}_1(\mathcal{X}_m)\cup\mathcal{R}_2(\mathcal{X}_m)$), but for the wrong alternative. We mix up the alternatives deciding one while it is the other. The number of errors of third kind is $S=S_3+S_4.$
\end{itemize}

\begin{table}
\centering
\small
\caption{Outcomes in testing $m$ null hypotheses against two-alternatives. For $i=1,2,$ $R_i$ is the cardinal of $\mathcal{R}_i(\mathcal{X}_m).$
The variables $(S_i)_{i=1\ldots 4}, (T_i)_{i=1,2},U,(V_i)_{i=1,2}$ are not observed and depend on $\mathcal{X}_m$ and $P.$}
\begin{tabular}{|c|ccc|c|}
\hline
\multirow{2}{*}{True situation} & \multicolumn{3}{|c|}{Decision} & \\
& Accept $H_0$ & Accept $H_1$ & Accept $H_2$ & Total\\ 
\cline{2-5}
$H_0$ & $U$ & $V_1$ & $V_2$ & $m_0(\mathbb{P})$\\ 
$H_1$ & $T_1$ & $S_1$ & $S_3$ & $m_1(\mathbb{P})$\\ 
$H_2$ & $T_2$ & $S_4$ & $S_2$ & $m_2(\mathbb{P})$\\ 
\hline
Total & $m-R_1-R_2$ & $R_1$ & $R_2$ & m\\
\hline
\end{tabular}
\label{tab:test:multiple:error}
\end{table}
To measure the type I error rate, it is common to consider the $k$-family-wise error rate (k-FWER) or the false discovery rate (FDR), see \cite{roquain2011type} and references therein. 
In our settings, controlling the type I error rate is a first step, but it would be necessary to control type III errors as well.
In the literature, the sum of the number of errors of first and third kind is controlled using the mixed-directional-family-wise error rate (mdFWER) or the mixed-directional-false discovery rate (mdFDR), see \cite{grandhi2015multiple}.
To our knowledge, the mdFWER and mdFDR are only controlled for the problem of testing null hypotheses against two-sided alternatives for finite-dimensional parameters, see for example \cite{guo2015stepwise} and references therein.

Biologists are interested in the proportions of each dynamic (subdiffusion, superdiffusion and Brownian motion) and their geographic location in the cell. 
In this context, controlling the FWER, that is the probability to make a single false discovery, is not relevant.
That is why we focus on a procedure which enables to control the FDR.
\cite[Section 5]{guo2015stepwise} also present several multiple test procedures associated to three-decision problems which aim to control the FDR.
Their approach is different since the problem is rewritten as a problem which carries out $3m$ null hypotheses.
Their proposed procedures control strongly the FDR only on $2m$ null hypotheses among the $3m$ under the dependence or independence of the test statistics.
In this section, we propose to adapt the multiple testing procedures of \cite{benjamini1995controlling} and \cite{benjamini2000adaptive} controlling the FDR that is the average proportion of false discoveries among the discoveries.
We stress that our model is non-parametric. Then we will consider the control of the mdFDR or mdFWER for a next issue.

Let $p^{(k)}, p_1^{(k)},$ and $p_2^{(k)}$ be respectively the $p$-value \eqref{eq:p-value_tot}, \eqref{eq:p-value_conf} and \eqref{eq:p-value_dir} associated to the $k^{\text{th}}$ trajectory, $k=1\ldots m.$
Let $p^{(1:m)} \leq p^{(2:m)} \leq \ldots \leq p^{(m:m)}$ be the ordered $p$-values, and $H_0^{(1:m)},\ldots H_0^{(m:m)}$ the associated null hypotheses. The adaptation of the Benjamini-Hochberg (BH) procedure is described in Procedure \ref{proc:BH}.

\begin{proc}[Adaptation of the Benjamini-Hochberg (BH) procedure]
\label{proc:BH}
\quad\newline
\vspace{-0.5cm}
\begin{enumerate}
\item Use the Benjamini-Hochberg procedure on the $p$-values $(p^{(k)})_{k=1\ldots m} :$ 
\begin{verse}
Let $k^{\star}$ be the largest $k$ for which $p^{(k:m)}\leq\frac{k}{m}\alpha$.\\
$\mathcal{R}_\alpha(\mathcal{X}_m)$ is the set of all hypotheses $H^{(k:m)}$ for $k=1,\dots,k^{\star}$.
\end{verse}
\item Let $\mathcal{R}_{1,\alpha}(\mathcal{X}_m)$ be the subset $\mathcal{R}_\alpha(\mathcal{X}_m)$ such that $p_1^{(k)}<p_2^{(k)}.$ 
\item Let $\mathcal{R}_{2,\alpha}(\mathcal{X}_m)$ be the subset $\mathcal{R}_\alpha(\mathcal{X}_m)$ such that $p_1^{(k)}>p_2^{(k)}.$ 
\end{enumerate}
\label{proc1}
\end{proc}

The set $\mathcal{R}_\alpha(\mathcal{X}_m)$ is the set of all rejected null hypotheses for our trichotomy test.
According to \cite{finner2001false}, we have,
\begin{align*}
\forall \mathbb{P}\in\mathcal{P}^{\otimes m},\quad \mathrm{FDR}(\mathcal{R}_\alpha(\mathcal{X}_m),\mathbb{P})&=\mathbb{E}\left(\frac{V}{\max(R,1)}\right)\\
&=\frac{m_0(\mathbb{P})}{m}\alpha.
\end{align*}
Then the FDR of Procedure \ref{proc:BH} is controlled by $\alpha.$
Moreover the $p$-values $p^{(k)}_{1}$ and $p^{(k)}_{2}$ give the information to which side of the distribution $F_{n_k}$ the associated test statistic $T^{(k)}_{n_k}$ is.
The case of equality ($p^{(k)}_{1}=p^{(k)}_{2}=1/2$)  never occurs since such null hypothesis will not be rejected at the step 1 of the Procedure \ref{proc1}. 

Actually, we may also use the adaptive BH procedure of \cite{benjamini2000adaptive} as the first step of Procedure \ref{proc1}. 
Then the Procedure \ref{proc1} will be referred to as the adaptive (respectively  standard) Procedure \ref{proc1} when we use the adaptive (respectively  standard) BH procedure as the first step. The adaptive BH procedure is more powerful than the standard BH procedure. 
It uses an estimation of the number of true null hypotheses $m_0(\mathbb{P})$ to increase the power of the BH procedure. 
\cite{benjamini2000adaptive} simply define the adaptive BH procedure by replacing  $m$ by an estimator $\hat{m}_0$ of $m_0$ in the BH procedure. The associated FDR is $(m_0/\hat{m}_0)\alpha$ and is less than $\alpha$ if $\hat{m}_0\leq m_0$ almost surely. 
The procedure to estimate $m_0$ presented in \cite{benjamini2000adaptive} is made for $\hat{m}_0$ to be upward biased. 
This bias favours the control of the FDR at level $\alpha$. Due to the fact that $\hat{m}_0$ does not fulfil the condition $\hat{m}_0\leq m_0$ almost surely, we can not say that the adaptive BH procedure controls the FDR at level $\alpha$ theoretically. 
However simulations from \cite{benjamini2000adaptive} suggest that the adaptive BH procedure controls the FDR at level $\alpha$.


\section{Simulation study and real data applications}
\label{sec:exp-res}
We assess the power of our single test procedure (on a single trajectory) and our multiple test procedure (on a collection of trajectories) by Monte Carlo simulations. 
We consider parametric alternatives : the \OU \eqref{eq:sde_OU} and the fractional Brownian motion with Hurst index $0<\mathfrak{h}<1/2$  for subdiffusion processes ($H_1$); the \BRD \eqref{eq:sde_dir} and the fractional Brownian motion with Hurst index  $1/2<\mathfrak{h}<1$ for superdiffusion processes ($H_2$). 
Then, we apply our procedure on real data comparing our results with those obtained thanks to a method based on the mean square displacement.

\subsection{Power of the test procedure for a single trajectory}

In Section \ref{sec:test}, we study the asymptotic distribution of the test statistic under the null hypothesis and parametric alternative hypotheses. 
More precisely Proposition \ref{prop1} states that the power of the test under parametric alternatives converges to $1$ with $n.$ 
Figure \ref{fig:power_single} shows the Monte Carlo estimates of the power under the parametric alternatives aforementioned in Proposition \ref{prop1}.
For a fixed step of time $\Delta$ and a fixed diffusion coefficient $\sigma,$ we vary the values of the other parameters and the length $n$ of the trajectories. For each parametric alternatives of Proposition \ref{prop1}, we can use exact simulation schemes.\\

\newpage

\begin{minipage}{\linewidth}
\begin{tabular}{cc}
\qquad \qquad  (a) & \qquad \qquad (b) \\
 \hspace{-1.5cm} \pgfplotstableread {fig/power_dbr.dat} {\loadedtable}
\pgfplotsset{
    compat=newest,
    every axis/.append style={
        legend image post style={xscale=0.5}
    }
}
\begin{tikzpicture}[baseline]
\begin{axis}
[xlabel=$v$,ylabel=Power,
minor tick num=4,
enlarge x limits=false,
enlarge y limits=false,
width=0.5\linewidth,
legend style={
   at={(0.6,0.05)}, anchor=south west, legend columns=1,font=\tiny}
] 

\addplot[color=red, dashed] table[x=dim_less_dir,y=p_dbr_10] from \loadedtable;
\addplot[color=blue, dash dot] table[x=dim_less_dir,y=p_dbr_30] from \loadedtable;
\addplot[color=green] table[x=dim_less_dir,y=p_dbr_50] from \loadedtable;

\legend{$n=10$,$n=30$,$n=50$}

\end{axis}
\end{tikzpicture}  & \pgfplotstableread {fig/power_ou.dat} {\loadedtable}
\pgfplotsset{
    compat=newest,
    every axis/.append style={
        legend image post style={xscale=0.5}
    }
}
\begin{tikzpicture}[baseline]
\begin{axis}
[xlabel=$\lambda$,ylabel=Power,
minor tick num=4,
enlarge x limits=false,
enlarge y limits=false,
width=0.5\linewidth,
legend style={
   at={(0.6,0.45)}, anchor=south west, legend columns=1,font=\tiny}
] 

\addplot[color=red, dashed] table[x=dim_less_ou,y=p_ou_10] from \loadedtable;
\addplot[color=blue, dash dot] table[x=dim_less_ou,y=p_ou_30] from \loadedtable;
\addplot[color=green] table[x=dim_less_ou,y=p_ou_50] from \loadedtable;

\legend{$n=10$,$n=30$,$n=50$}

\end{axis}
\end{tikzpicture}  \\
  \multicolumn{2}{c}{   (c)}\\
 \multicolumn{2}{c}{\hspace{-1.5cm}\pgfplotstableread {fig/power_fbm.dat} {\loadedtable}
\pgfplotsset{
    compat=newest,
    every axis/.append style={
        legend image post style={xscale=0.5}
    }
}
\begin{tikzpicture}
\begin{axis}
[xlabel=$\mathfrak{h}$,ylabel=Power,
minor tick num=4,
enlarge x limits=false,
enlarge y limits=false,
width=0.5\linewidth,
legend style={
   at={(0.4,0.6)}, anchor=south west, legend columns=1,font=\tiny}
] 

\addplot[color=red, dashed] table[x=H,y=p_fbm_10] from \loadedtable;
\addplot[color=blue, dash dot] table[x=H,y=p_fbm_30] from \loadedtable;
\addplot[color=green] table[x=H,y=p_fbm_50] from \loadedtable;

\legend{$n=10$,$n=30$,$n=50$}

\end{axis}
\end{tikzpicture}} 
\end{tabular}
\vspace{-0.5cm}

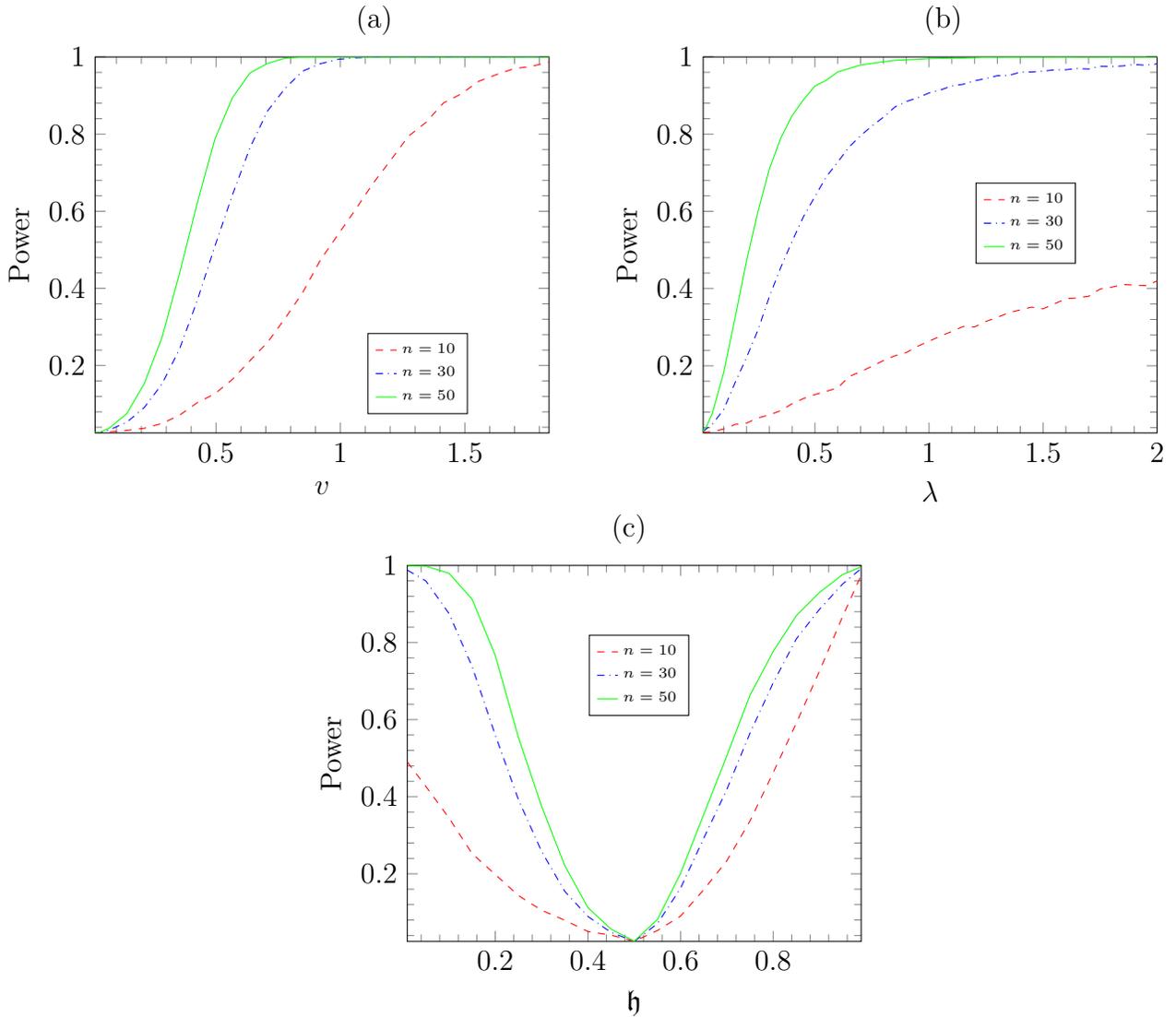
\captionof{figure}{Monte Carlo estimate of the power of the test at level $\alpha=0.05$ according to the trajectory length $n$ and the parameter associated to the following parametric alternatives : (a) \BRD (parameter $v=(v_1,v_2)$ such that $v_1=v_2$); (b) the Ornstein-Uhlenbeck process (parameter $\lambda$) and (c) fractional Brownian motion (parameter $\mathfrak{h}$).  We use $10\,001$ Monte Carlo replications for computing each point of the power curves.}
\label{fig:power_single}
\end{minipage}

If $(X_t)$ is an Ornstein-Uhlenbeck process \eqref{eq:sde_OU} which is entered in its stationary regime, then the distribution of the test statistic does not depend on $\theta$ (see Appendix A.4 (Supplementary Material)).
Figure \ref{fig:power_single}(b) shows the plot of the power regarding the values of $\lambda$ which models the strength of the restoring force toward the equilibrium position $\theta$. 
Stronger is the force, more powerful is the test.
 
Furthermore if $(X_t)$ is a \BRD with parameters $(v,\sigma)$ such that $\norme{v}\sqrt{\Delta}>\sigma$, then the particle goes toward the direction of $v$ while the Brownian random part of the SDE \eqref{eq:sde_dir} does not affect much its trajectory (see Appendix A.4 (Supplementary Material)). The bigger is the norm of the drift parameter $v$, more powerful is the test, see Figure \ref{fig:power_single}(a). 

Finally if $(X_t)$ is a fractional Brownian motion, then the distribution of $T_n$ depends only on the Hurst index $\mathfrak{h}$ (see Appendix A.4 (Supplementary Material)). Then the test procedure is equivalent to test the null hypothesis "$\mathfrak{h}=1/2$" versus "$\mathfrak{h}\neq 1/2$", see Figure \ref{fig:power_single}(c).

\subsection{The Average Power and the mdFDR of the multiple test procedure for a collection of trajectories}

The simulation settings are described as follows. 
According to experience, we choose the number of trajectories to be $m=100$ or $m=200.$
All trajectories are assumed to have the same size $n=30,$ since this size is reasonable regarding real data.
The diffusion coefficient $\sigma$ and the lag-time $\Delta$ are set to $1.$
The collection of trajectories $\mathcal{X}_m$ is composed of :
\begin{itemize}
\item $m_0<m$ Brownian trajectories ($H_0$);
\item $(m-m_0)/2$ subdiffusive trajectories ($H_1$), half from an Ornstein-Uhlenbeck process with parameter $\lambda>0$, half from a fractional Brownian motion with Hurst index $0<\mathfrak{h}<1/2$;
\item $(m-m_0)/2$ superdiffusive trajectories ($H_2$), half from a Brownian motion with drift $v \in \mathbb{R}^2$, half from a fractional Brownian motion with Hurst index $1/2<\mathfrak{h}<1$.
\end{itemize}
The parameters to simulate these trajectories are given in Table \ref{tab:param_val}.
We take the parameters corresponding to a power of the single test procedure of 80\%. Such parameters are used to produce Figure \ref{fig:typical_traj} (a). This choice seems coherent in regards to trajectories from real data, see Figure \ref{fig:typical_traj} (b).
For a given $m$, the proportion of true null hypotheses $H_0$ varies : $m_0/m\in\{0,0.2,0.4,0.6,0.8\}$.

\begin{table}[t]
\centering
\caption{ Parameters used for simulating the alternative hypotheses. For simplicity we took $\sigma=1$ for all processes (including Brownian motion). We choose $\Delta=1$.
}
\begin{tabular}{llcc}
\hline
Hypothesis&Process & Parameter &Value \\
\hline 
$H_1$ &\OU & $\lambda$ & 0.53 \\ 
$H_1$ &Fractional  Brownian & $\mathfrak{h}$ & 0.13 \\ 
$H_2$ &\BRD & $\norme{v}$ & 0.66 \\ 
$H_2$ &Fractional  Brownian & $\mathfrak{h}$ & 0.85 \\ 
\hline
\end{tabular} 
\label{tab:param_val}
\end{table}

The mdFDR is a rate which controls the error of type I and type III. 
It is defined as $\mathbb{E}((V+S)/\max(R,1))$ (see Table \ref{tab:test:multiple:error}). 
Table  8 Appendix B (Supplementary Materials)  shows that the Procedure \ref{proc1} also controls the mdFDR. 
The mdFDR and FDR appear to be very close meaning that the number of type III errors is extremely low.
Furthermore, the adaptive Procedure \ref{proc1} (where $m_0$ is estimated) is less conservative than the standard Procedure \ref{proc1}.
As expected, the FDR and mdFDR increase as the proportion of true null hypotheses increases.

To assess the performance of our multiple test procedure, we use the average power \citep{grandhi2015multiple} :
\begin{equation}
\mathbb{E}\left(\frac{S_i}{m_i}\right), \quad i=1,2
\end{equation}
where $m_i$ is the number of true alternatives $H_i$ and $S_i$ $(i=1,2)$ is defined in Table \ref{tab:test:multiple:error}. In our simulation scheme, we set $m_i=(m-m_0)/2$. The average power is the expected proportion of hypotheses accepted as $H_i$ among all true alternatives $H_i$. Average powers of the different simulations corresponding to different values of $m_0/m$ and $m$ are shown on Figure \ref{fig:multiple_power}.\newline
First, we can see that the powers of $H_1$ and $H_2$ are not very sensitive to the number of hypotheses $m$ for both the standard Procedure \ref{proc1} and the adaptive Procedure \ref{proc1}. Secondly, the adaptive Procedure \ref{proc1} is more powerful than the standard Procedure \ref{proc1} (red and blue dashed lines respectively above red and blue solid lines in Figure \ref{fig:multiple_power}). The benefit of the adaptive Procedure \ref{proc1} over the standard Procedure \ref{proc1} decreases as the proportion of true null hypotheses $m_0/m$ increases (solid and dashed line of same color getting closer as $m_0/m$ increases in Figure \ref{fig:multiple_power}). This is due to the fact that, as $m_0/m$ tends to 1, $m_0$ and then $\hat{m}_0$ tend to $m$.  As a result, the adaptive and standard Procedure \ref{proc1} become similar.

\begin{rem}
We observe that, given a certain procedure (standard or adaptive Procedure \ref{proc1}), the average power of $H_1$ is lower than the average power of $H_2$, see Figure \ref{fig:multiple_power}. It is not due to the choice of parameters as both alternatives $H_1$ and $H_2$ are simulated to share the same power (80\%) with the single test procedure. Actually, it comes from the fact that the \Pvs under $H_2$ are stochastically smaller than the \Pvs under $H_1$ (see Appendix C Figure 6 (Supplementary Material)). Then, the true superdiffusive trajectories are more easily detected as non Brownian in the first step of the (adaptive) Procedure \ref{proc1} than the true subdiffusive trajectories. We note that, if we use other parametric models for subdiffusion ($H_1$) and superdiffusion ($H_2$), we can have the opposite situation.
\end{rem}

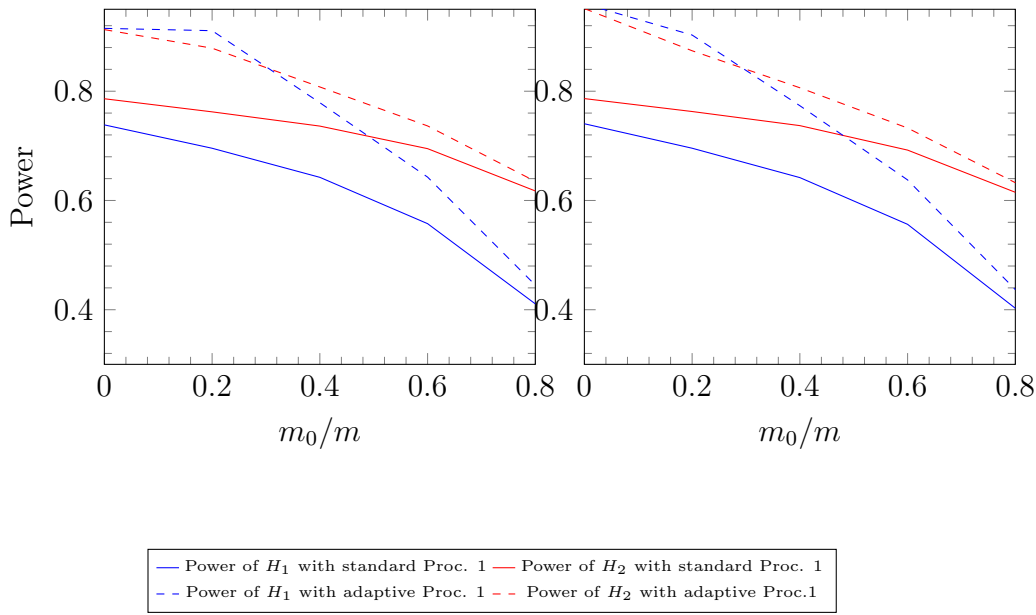
\begin{figure}[h!]
\centering
\begin{tabular}{cc}
   \pgfplotstableread {fig/power_benj_100.dat} {\loadedtable}
\pgfplotsset{
    compat=newest,
    every axis/.append style={
        legend image post style={xscale=0.5}
    }
}
\begin{tikzpicture}[baseline]
\begin{axis}
[xlabel=$m_0/m$,ylabel=Power,
minor tick num=4,
enlarge x limits=false,
enlarge y limits=false,
ymin=0.3,
ymax=0.95,
width=0.45\linewidth,
legend style={
cells={align=left},at={(0.1,-0.7)}, anchor=south west, legend columns=2,font=\tiny}
] 

\addplot[color=blue] table[x=prop_h0,y=p_sub_benj] from \loadedtable;
\addplot[color=red] table[x=prop_h0,y=p_sup_benj] from \loadedtable;
\addplot[color=blue,dashed] table[x=prop_h0,y=p_sub_benj_ad] from \loadedtable;
\addplot[color=red,dashed] table[x=prop_h0,y=p_sup_benj_ad] from \loadedtable;
\legend{Power of $H_1$ with standard Proc. 1, Power of $H_2$ with standard Proc. 1, Power of $H_1$ with adaptive Proc. 1,Power of $H_2$ with adaptive Proc.1, Power of a single test procedure \\ on a single trajectory from $H_1$ or $H_2$}

\end{axis}
\end{tikzpicture} & 
   \hspace{-5cm}\pgfplotstableread {fig/power_benj_200.dat} {\loadedtable}
\pgfplotsset{
    compat=newest,
    every axis/.append style={
        legend image post style={xscale=0.5}
    }
}
\begin{tikzpicture}[baseline]
\begin{axis}
[xlabel=$m_0/m$,
minor tick num=4,
enlarge x limits=false,
enlarge y limits=false,
ymin=0.3,
ymax=0.95,
width=0.45\linewidth,
legend style={
cells={align=left},at={(-0.9,-0.1)}, anchor=south west, legend columns=2,font=\tiny}
] 

\addplot[color=blue] table[x=prop_h0,y=p_sub_benj] from \loadedtable;
\addplot[color=red] table[x=prop_h0,y=p_sup_benj] from \loadedtable;
\addplot[color=blue,dashed] table[x=prop_h0,y=p_sub_benj_ad] from \loadedtable;
\addplot[color=red,dashed] table[x=prop_h0,y=p_sup_benj_ad] from \loadedtable;


\end{axis}
\end{tikzpicture}\\
\end{tabular}
 \caption{Monte Carlo estimate of the average power against the proportion of true null hypothesis $m_0/m$ in the collection of hypotheses. On the left we test $m=100$ hypotheses, on the right $m=200$. 
 }
 \label{fig:multiple_power}
\end{figure}

Finally, we compare the adaptive Procedure \ref{proc1} to the MSD classification of \cite{feder1996constrained}, based on a fit of the MSD curve to $t\rightarrow t^\beta$ , see Section \ref{subsec:msd}. We assess the two methods on a single collection of trajectories $\mathcal{X}_m$ with $m=200$ and $m_0/m=0.4$, composed of a mixture of Brownian motion, subdiffusion and superdiffusion as described at the beginning of this section. We get the confusion matrices Table \ref{tab:confusion:ad_proc1} and \ref{tab:confusion:msd} for respectively the adaptive Procedure \ref{proc1} and the MSD method. As suggested by the limiting curves used by \cite{feder1996constrained} (see Figure \ref{fig:msd_curve}), the MSD method mixes up the Brownian trajectories with both subdiffusion and superdiffusion (see line 1 of Table \ref{tab:confusion:msd}). Another big issue is that 40\% of the particles undergoing subdiffusion are considered as immobile by the MSD method. On the other hand, the adaptive procedure \ref{proc1} detects well subdiffusion and superdiffusion in the setting of this simulation (line 2 and 3 of Table \ref{tab:confusion:ad_proc1}). More importantly, it controls the number of false discoveries through the FDR (line 1 of  Table \ref{tab:confusion:ad_proc1}).

\begin{table}[t!]
\centering
\caption{Confusion matrix for the MSD method}\label{tab:confusion:msd}
\begin{tabular}{lcccc}
\hline 
Ground truth/Test label & Brownian & Subdiffusion & Superdiffusion& Not moving \\ 
\hline 
Brownian & 19  & 45 & 36 & 0 \\ 
Subdiffusion & 0 & 60 & 0 & 40 \\ 
Superdiffusion & 3 & 0 & 97 &0 \\ 
Not moving & 0 & 0 & 0 & 0 \\ 
\hline 
\end{tabular} 
\end{table}

\begin{table}[t!]
\centering
\caption{Confusion matrix for the adaptive Proc.1}\label{tab:confusion:ad_proc1}
\begin{tabular}{lccc}
\hline 
Ground truth/Test label & Brownian & Subdiffusion & Superdiffusion \\ 
\hline 
Brownian & 96  & 0 & 4 \\ 
Subdiffusion & 23 & 77 & 0 \\ 
Superdiffusion & 10 & 0 & 90 \\ 
\hline 
\end{tabular} 
\end{table}

\subsection{Real data : the Rab11a protein sequence}
Fluorescence imaging and microscopy has a prominent role in life science and medical research. It consists of detecting specific cellular and intracellular objects of interest at the diffraction limit (200 nm).  
%
%
These objects are first tagged with genetically engineered proteins that emit fluorescence. Then, they can be observed   using wide field or confocal microscopy. Several image analysis methods have been  developed to quantify intracellular trafficking, including object detection and tracking of fluorescent tags in cells (\cite{chenouard2014objective,Kervrann2016}).

Here, we are particularly interested in studying the exocytosis process, that is the mechanism of active transport of proteins out of the cell.  Small structures, called the vesicles, travel from organelles to the cell membrane, propelled by motor activity. The vesicle fuses with the plasma membrane and delivers the transported protein in the extra-cellular medium. Given computed trajectories, we investigate here the quantification of vesicles dynamics  and trafficking. As explained earlier in the paper, the trajectories can be generally classified into three categories : Brownian motion, subdiffusion and superdiffusion.

As a model of exocytosis/recycling, we focus on the Rab11a protein. This protein is a member of the dynamic architecture of the complex molecular assembly which regulates recycling organelles trafficking. It plays an essential role in the regulation of late steps of vesicle recycling to the plasma membrane, namely the tethering-docking process (\cite{Schafer2014}). During exocytosis, Rab11a is attached to  the vesicle membrane. Then, tracking Rab11a amounts to tracking the vesicle during the exocytosis phase. After the fusion of the vesicle to the cell membrane, Rab11a is recycled in the cytosol. During the recycling step, the tracking of Rab11a is not accurate as the proteins are detached from the vesicle and scatter around the cytosol. It is currently under investigation. For that reason, we focus on the exocytosis process until the fusion time with the cell membrane. 

An illustration of the Rab11a sequence is shown in Figure \ref{fig:rab11_data}  where the dark spots correspond to Rab11a vesicles in a ``crossbow'' micro-patterned shape cell.
{A typical image extracted from an image sequence is shown Figure \ref{fig:rab11_data}. The image sequence is composed of 600 images of size $256 \times 240$ (1 pixel=160nm) acquired at 10 frames/s ($\Delta=0.1s)$. 
We tracked $1\,561$ trajectories  with the multiple hypothesis tracking method with default parameters (\cite{chenouard2013multiple}), available on the Icy software (\url{http:www.icy.org}). However, we discarded too small and to long trajectories corresponding to tracking errors in most cases.
Then, we have to get rid of the particles that do not move enough and consequently, can not be modelled by diffusion processes. In practice, we analyse only the trajectories with at least 20 distinct positions and  the vesicles that stop at the same position less than  $K = \lfloor n/10 \rfloor$ times  (with $n$ the length of the trajectory). In the case of the aforementioned image sequence, we end up with 166 trajectories whose median length is $n = 83$. 

In Figure \ref{fig:rab11_data}, our results show that the four procedures -- adaptive Procedure \ref{proc1}, standard Procedure \ref{proc1}, single test and  MSD method -- do not produce similar classification results visually. 
From the simulations, we found that the MSD method tends to wrongly over-detect subdiffusion and superdiffusion (see Tables \ref{tab:confusion:msd} and \ref{tab:confusion:ad_proc1}). This is probably true also in the case of real Rab11a sequence. 
In Table \ref{tab:rab11_data}, we give the proportion of each type of diffusion for the different methods aforementioned. 
The adaptive Procedure \ref{proc1} tends to decrease the number of Brownian trajectories compared to the standard Procedure \ref{proc1}. It is not surprising as the adaptive Procedure \ref{proc1} is defined to be more powerful than the standard Procedure \ref{proc1} : it rejects more easily the null hypothesis. This gain in power benefits to the alternative $H_1$ (subdiffusion). In fact we detect 23\% of subdiffusion for the adaptive Procedure \ref{proc1} against 16\% for the standard Procedure \ref{proc1} while both detect 4\% of superdiffusion (see Table \ref{tab:rab11_data}). The single test procedure detects even less Brownian motion but we know that it can not control the FDR.
In Figure \ref{fig:rab11_data}, the subdiffusion trajectories labelled with the test approach are more located in the center of the cell  in a region corresponding to the Endosomal Recycling Compartment which is known to organize Rab11a carrier vesicles (\cite{Schafer2014}). It is also true for the subdiffusion trajectories labelled with the MSD analysis but we have just said that there is probably an over-detection of the subdiffusion with this method. We note that we carry the classification of trajectories with our different test procedures and the MSD method on multiple sequences of Rab11a protein, see Appendix C Figure 7 (Supplementary Material).
}

\newpage
\begin{minipage}{0.95\linewidth}
\centering
\begin{tabular}{cc}
(a) & (b)\\
\includegraphics[width=0.5\textwidth]{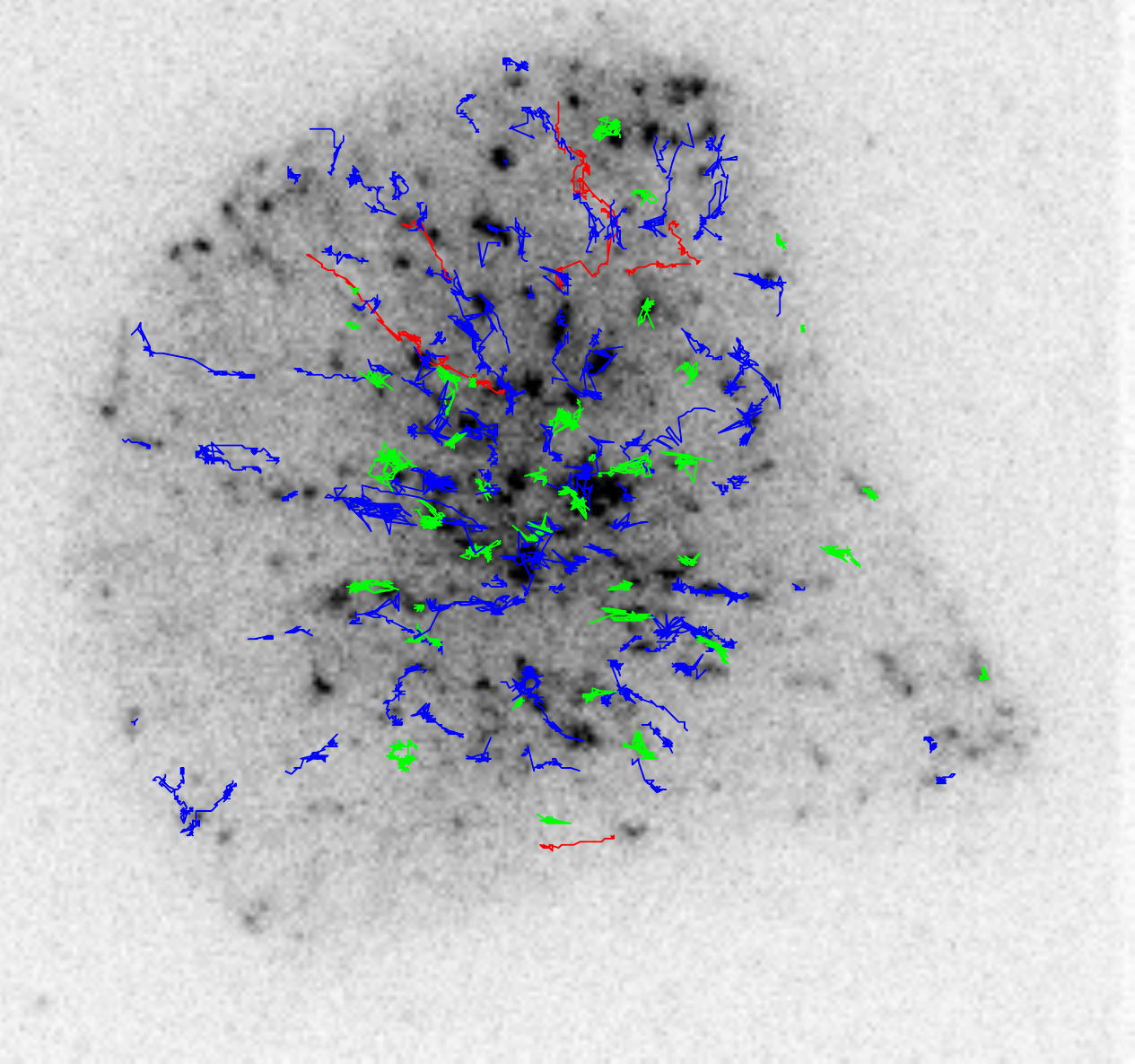} 
&
\includegraphics[width=0.5\textwidth]{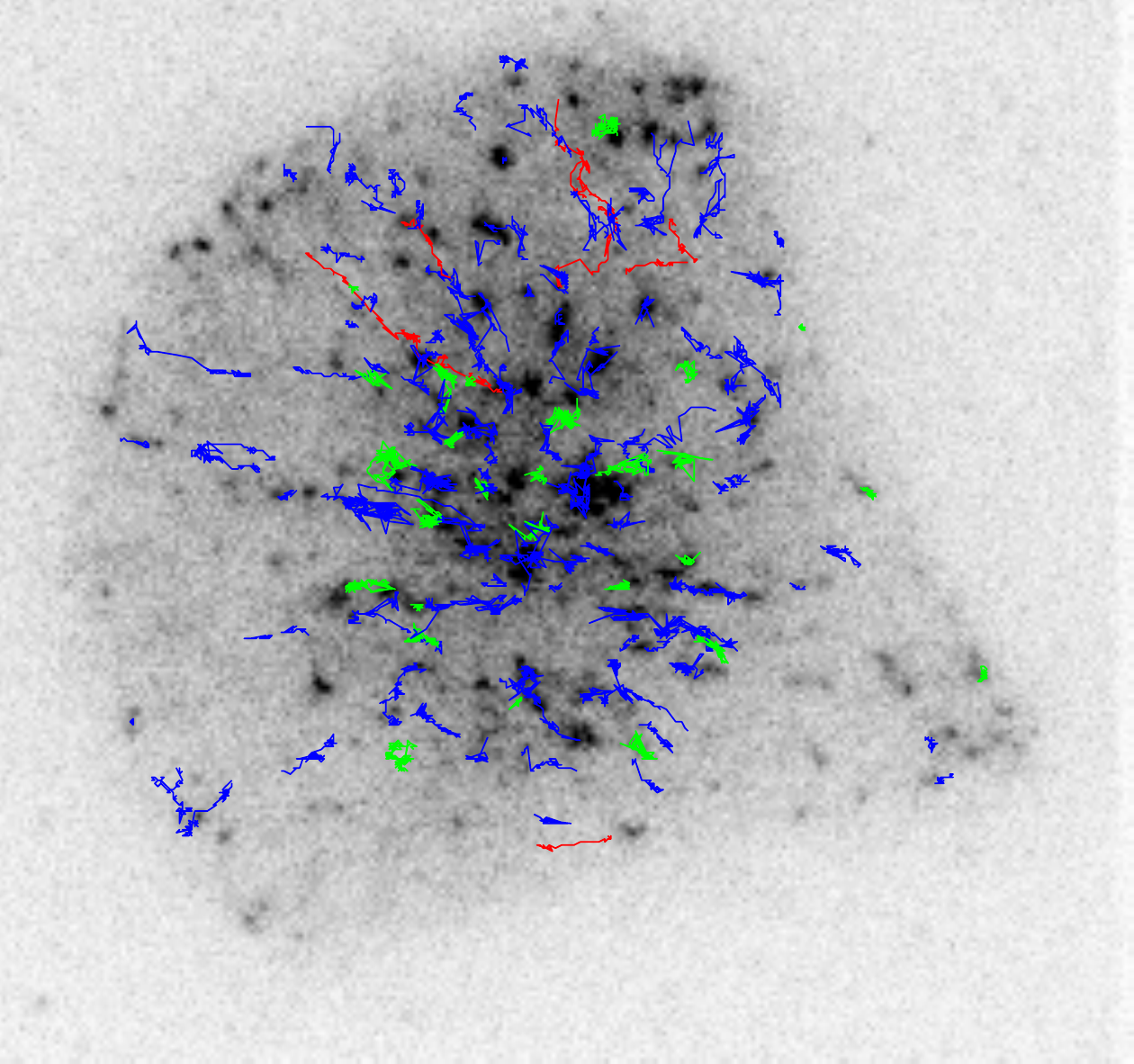}  
\\
(c) & (d)\\
\includegraphics[width=0.5\textwidth]{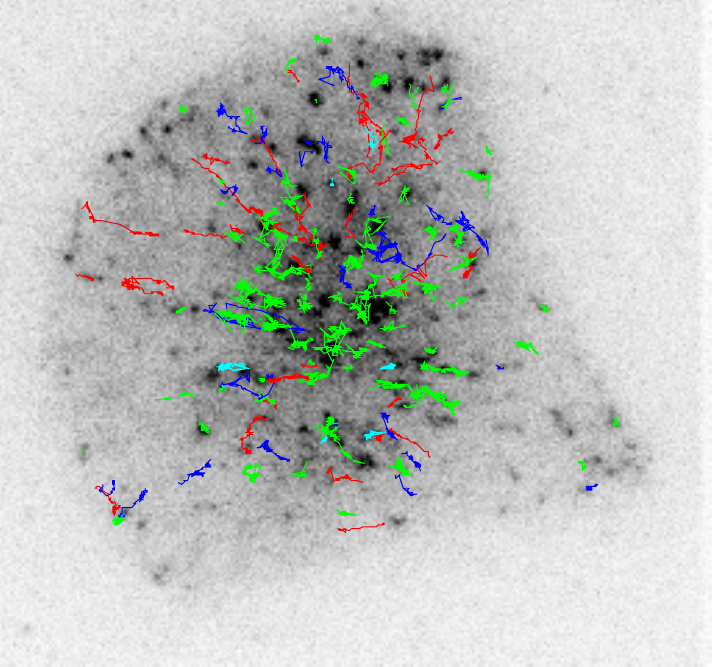}
&
\includegraphics[width=0.5\textwidth]{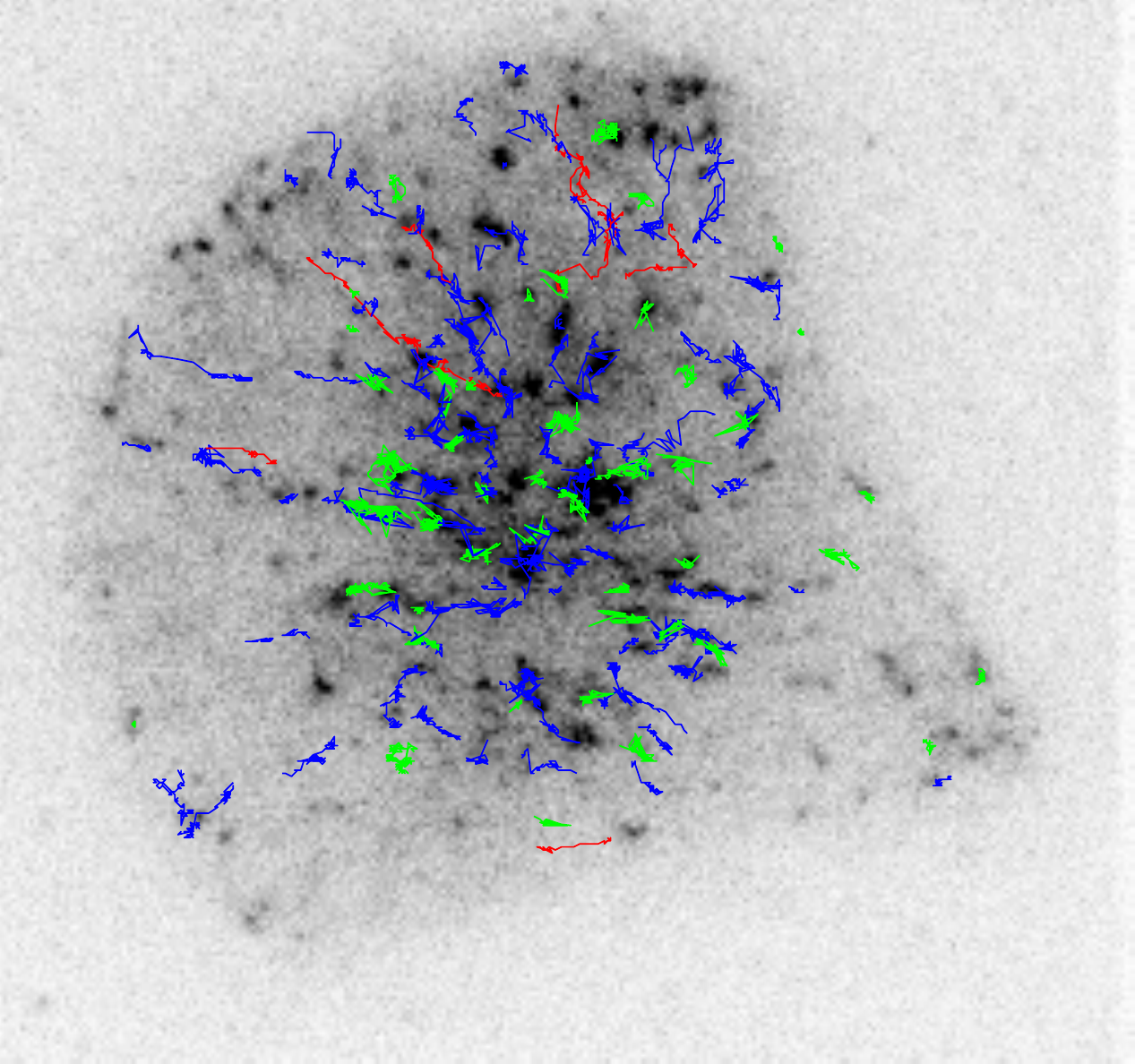}  
\end{tabular}
\vspace{-0.5cm}

 \captionof{figure}{Map of the classification of the trajectories of the Rab11a sequence with (a) standard multiple test procedure \ref{proc1}, (b) its adaptative version, (c) MSD, (d) single test procedure. The colour code is: blue for Brownian motion, red for superdiffusion and green for subdiffusion, cyan for immobile particule (for the MSD method only).}
 \label{fig:rab11_data}
\end{minipage}%

\begin{table}[t!]
\centering
\caption{Percentages of Brownian, superdiffusive and subdiffusive trajectories in the Rab11a sequence according to the different methods of classification.}\label{tab:rab11_data}
\begin{tabular}{lccc}
\hline 
Method & Brownian & Subdiffusion & Superdiffusion \\ 
\hline 
Standard Proc. 1 & 80  & 16 & 4 \\ 
adaptive Proc. 1 & 73 & 23 & 4 \\ 
Single test & 66 & 28 & 6 \\ 
MSD & 16 & 63 & 21 \\ 
\hline 
\end{tabular} 
\end{table}
\section{Discussion}
In this paper, we proposed a method for classifying the particle trajectories observed in living cells into three types of diffusion: Brownian motion, subdiffusion and superdiffusion. We used a test approach with the Brownian motion as the null hypothesis. More specifically, we developed a non-parametric three-decision test whose alternatives are subdiffusion and superdiffusion. On the one hand we built a single test procedure for testing a single trajectory, on the other hand we proposed a multiple test procedure for testing a collection of trajectories. These procedures control respectively the type I error and the false discovery rate at level $\alpha$. It is worth noting that the length of the trajectory $n$ is taken into account in our classification rule. Our approach can be considered as an alternative to the MSD method. It gives more reliable results as confirmed by our Monte Carlo simulations and evaluations on real sequences of images depicting protein dynamics acquired with TIRF or SPT-PALM microscopy.

\section*{Source code}

A Matlab package of the method is available at: \newline
\textcolor{blue}{{\url http://serpico.rennes.inria.fr/doku.php?id=software:thot:index}}

\section*{Acknowledgements}
We thank Jean Salamero (UMR 144 CNRS-Institut Curie) who provided the microscopy image sequences and for his helpful insights and assistance with experiments.

\bibliographystyle{agsm}

\bibliography{biblio}


\appendix
\newpage

\section{Proofs}\label{sec:proofs}
\subsection{Proof of Theorem \ref{thm:conv:H0}}
\begin{proof}[Proof of Theorem \ref{thm:conv:H0}]\label{app:proof_thm1}
Under the null hypothesis, $X_t/\sigma= B_t$ is a standard Brownian Motion.
Let us introduce the following random variable,
\begin{equation}\label{app:eq:wiener}
\tilde{T}_n = \max_{k=1\ldots n}\left\|\frac{1}{\sqrt{n}} R_k\right\|_2, 
\end{equation}
where $R_k=\sum_{j=1}^k (B_{j\Delta} - B_{(j-1)\Delta})/{\sqrt{\Delta}}.$
Since $\hat{\sigma}_n$ is a consistent estimator of $\sigma$ and using the Slutsky Lemma, it remains to prove that $\tilde{T}_n$ converges in distribution to $S_0.$
Using the fact that the increments of the Brownian process are independent and Gaussian, $R_k$ is the sum of $j$ independent identically $\mathcal{N}(0,1)$-distributed random variables. 
We define the following process,
$$
W_t^{(n)} = \frac{1}{\sqrt{n}} R_{\lfloor nt\rfloor},\qquad t\in[0,1],
$$
where $\lfloor x\rfloor$ denotes the integer part of $x\in\mathbb{R}.$ 
Then we get:
\begin{equation}
\tilde{T}_n=\sup_{t\in[0,1]}\left\| W_t^{(n)} \right\|_{2}.
\end{equation}

Due to Donsker's Theorem \citep[Theorem 8.2]{billingsley2013convergence}, $(W_t^{(n)})$ converges in distribution to the Wiener measure as $n\rightarrow \infty$ over the space of continuous function on $[0,1].$ 
Since $x\to\sup_{t\in[0,1]}\|x(t)\|$ is a continuous function on the space of continuous functions from $[0,1]$ to $\mathbb{R},$ $\tilde{T}_n$ converges in distribution to $S_0$. 
\end{proof}

\subsection{Proof of Proposition \ref{prop:sigma1} : the convergence of the estimator \eqref{eq:sigma_standard} of the diffusion coefficient}
\label{app:proof_sigma}
Notice that $\hat{\sigma}_n=\hat{\sigma}_{1,n}$ is strongly consistent under the null hypothesis due to the strong law of large numbers and the independence of the increments of the Brownian motion.

We focus now on the three alternatives.
According to the alternative, we denote by $\mathbb{E}$ the expectation associated to the measure $P$ of the solution of the related SDE (\eqref{eq:sde_FBM} or \eqref{eq:sde_OU} or \eqref{eq:sde_dir}).

\begin{proof}[Brownian with drift]
We may rewrite the strong solution of the SDE \eqref{eq:sde_dir} as,
\begin{equation*}
X_{t_{k}}=X_{t_{k-1}}+v \Delta+\sigma \sqrt{\Delta} \epsilon_{k},\qquad k=1\ldots n,
\end{equation*}
where $\sqrt{\Delta}\epsilon_k=B_{t_{k}}-B_{t_{k-1}},$ and $(B_t)$ is a standard Brownian motion.
Then the random variables $Z_k=\|v \Delta+\sigma \sqrt{\Delta} \epsilon_{k}\|^2,$ $k=1\ldots n,$ are positive independent identically distributed random variables, and admit a moment of order 1,
$$ 
\mathbb{E}(Z_k)=\Delta^2\|v\|^2+2\Delta\sigma^2.
$$
Then according to the strong law of large numbers, $\hat{\sigma}_n$ converges almost surely to ${\Delta\|v\|^2}/{2}+\sigma^2.$
\end{proof}

\begin{proof}[Ornstein-Uhlenbeck process]
Let $(X_t)$ be an Ornstein-Uhlenbeck process  \eqref{eq:sde_OU}. 
The SDE \eqref{eq:sde_OU} admits a unique solution \citep[Section 2.2.3]{bressloff2014stochastic}
\begin{equation}
X_t-X_{s}=(X_{s}-\theta)(e^{-\lambda (t-s)}-1)+\sigma \int_{s}^t e^{-\lambda (t- u)} dB^{1/2}_u.
\label{app:eq:ou_sde_int}
\end{equation}
Then $(X_t)$ is a stationary Gaussian process where transition density $p(s,x,t,y)$ is the density of 
$$
\mathcal{N}\left(x+(x-\theta)(e^{-\lambda (t-s)}-1),\sigma^2(1-e^{-2\lambda (t-s)})/(2\lambda)\mathbf{I}_2\right).
$$
Then we get that,
\begin{align*}
\mathbb{E}(\|X_{t+\Delta}-X_t\|^2\mid X_t=x) &= \int\|x-y\|^2 p(t,x,t+\Delta,y)dy,\\
&= \|x-\theta\|^2(e^{-\lambda\Delta}-1)^2 + \sigma^2(1-e^{-2\lambda\Delta})/\lambda.
\end{align*}
Moreover the density $\mu$ of the stationary distribution of $(X_t)$ is the Gaussian variable $\mathcal{N}\left(\theta,(\sigma^2\mathbf{I}_d)/(2\lambda)\right).$
Then we obtain that,
\begin{align*}
\mathbb{E}(\|X_{t+\Delta}-X_t\|^2) &= \int \mathbb{E}(\|X_{t+\Delta}-X_t\|^2\mid X_t=x) \mu(x)dx,\\
&= \sigma^2(e^{-\lambda\Delta}-1)^2/\lambda + \sigma^2(1-e^{-2\lambda\Delta})/\lambda,\\
&= 2\sigma^2(1-e^{-\lambda\Delta})/\lambda.
\end{align*}
Now, according to \citet[Lemma 3.1]{bibby1995martingale}, if $(X_t)$ is a stationary diffusion, $\hat{\sigma}_n^2$ converges in probability to $\mathbb{E}(\|X_{t+\Delta}-X_t\|^2)/(2\Delta).$ We deduce the result.
\end{proof}

\begin{proof}[Fractional Brownian Motion]
Let $(X_t)$ be a fractional Brownian motion \eqref{eq:sde_FBM}. 
Due to the self-similarity property and the stationary increments of the fractional Brownian motion, the following process,
$$
W_t^{(n)}= \frac{X_{t_0+n\Delta t}-X_{t_0}}{(n\Delta)^{\mathfrak{h}}\sigma},\qquad t\in[0,1],
$$
is a standard fractional Brownian motion.
The statistic associated to the quadratic variation of the process $(W_t^{(n)})$ may be defined as,
\begin{align*}
V_n &= \frac{1}{n}\sum_{i=1}^n \frac{\left\|W_{i/n}^{(n)}-W_{(i-1)/n}^{(n)}\right\|^2}{\mathbb{E}\left\|W_{i/n}^{(n)}-W_{(i-1)/n}^{(n)}\right\|^2} - 1,\\
&= \frac{\hat{\sigma}_n^2}{\sigma^2\Delta^{2\mathfrak{h}-1}} -1.
\end{align*}
According to \citet[Proposition 1]{coeurjolly2001estimating}, $V_n$ converges almost surely to $0.$ Then we deduce that $\hat{\sigma}_n^2/\sigma^2$ tends to $\Delta^{2\mathfrak{h}-1}$ almost surely.
\end{proof}

\subsection{Proof of Proposition \ref{prop1} : the asymptotic behaviour of the test statistic under parametric alternatives}
\label{app:proof_conv_h1_h2}
Since the diffusion parameter $\sigma$ is unknown, the test statistic \eqref{eq:stat:test:discret} is normalized by an  estimator of $\sigma.$
Proposition 1 states that $\hat\sigma_n/\sigma$ converges in probability to a constant.  
Therefore, it is sufficient to study the asymptotic behaviour of the test statistic as if $\sigma$ was known. Then, in this subsection, we consider the test statistic $T_n$ as : 
\begin{equation}
T_n=\frac{\max_{i=1,\dots,n}\left\| X_{t_i}-X_{t_0} \right\|_2}{\sigma\sqrt{t_n-t_0}}.
\label{app:stat_test}
\end{equation} 

\begin{proof}[\BRD ($H_2$)]
The process $(X_t)$ is a \BRD \eqref{eq:sde_dir} and may be rewritten as,
$$
X_{t_n}-X_{t_0}=v(t_n-t_0)+\sigma(B_{t_n}-B_{t_0}).
$$
Using that $(B_t)$ is a Brownian motion, the distribution of $B_{t_n}-B_{t_0}$ is $\mathcal{N}(\mathbf{0}_2,(t_n-t_0)\mathbf{I}_2).$ Then we have :
\begin{equation}
\mathbb{E}\left(\norme{\frac{X_{t_n}-X_{t_0}}{\sigma(t_n-t_0)}-\frac{v}{\sigma}}^2\right)=\frac{2}{t_n-t_0}.
\label{eq:p_asymp_dir_3}
\end{equation}
As $t_n-t_0=n\Delta,$ we deduce that $V_n=({X_{t_n}-X_{t_0}})/({\sigma(t_n-t_0)})$ converges in probability to $v/\sigma.$  As the euclidean norm is a continuous function, the variable $\|V_n\|$ converges in probability to $\|{v}\|/{\sigma} >0.$
Then $\sqrt{n\Delta}V_n$ converges in probability to $+\infty.$
Since $T_n$ is lower bounded by $\sqrt{n\Delta}V_n=\|({X_{t_n}-X_{t_0}})\|/({\sigma\sqrt{t_n-t_0}})$, the proof is complete.
\end{proof}

\begin{proof}[The Ornstein-Uhlenbeck process ($H_1$)]
The process $(X_t)$ is an Orn\-stein-Uhlenbeck process \eqref{eq:sde_OU}.
We assume that the process is in its stationary regime, that means $X_{t_0}$ is drawn from the stationary distribution that is $X_{t_0}\sim \mathcal{N}(\theta, \sigma^2/(2\lambda)\mathbf{I}_2)$. 
The SDE \eqref{eq:sde_OU} admits an unique solution \citep[Section 2.2.3]{bressloff2014stochastic}
\begin{equation}
X_t-\theta=(X_{t_0}-\theta)e^{-\lambda (t-t_0)}+\sigma  \int\limits_{t_0}^t e^{-\lambda(t- u)} dB^{1/2}_u.
\label{app:eq:ou_sde_int2}
\end{equation}

Then we may bound the test statistic $T_n$ by,
$$
\frac{\|X_{t_0}-\theta\|}{\sigma\sqrt{n\Delta}} + \sum_{i=1}^2 \max_{k=1\ldots n} \frac{|X_{t_k}^i-\theta_i|}{\sigma\sqrt{n\Delta}}.
$$
Since $X_{t_0}$ is drawn from the stationary distribution, the term $\|X_{t_0}-\theta\|/{\sqrt{n\Delta}}$ converges in probability to zero.\newline
Now we show that the second term in the previous equation tends to zero in probability as well.
We introduce the variables $(\xi_k^1,\xi_k^2)$ defined as,
$$
\xi_k^i= (X_{t_k}^i-\theta_i)\sqrt{2\lambda}/\sigma, \qquad k=1\ldots n,\ i=1,2.
$$
Then for $i=1,2,$ the sequence $(\xi_k^i)_k$ is a standardized stationary normal sequence with covariance function,
$$
r_k = \mathbb{E}\left(\xi_\ell^i\xi_{\ell+k}^i\right)=e^{-k\Delta},\quad k\geq-\ell.
$$
Let $i$ be in $\{1,2\}.$
Then $(a_n(\max_{k=1\ldots n}(\xi_k^i)-b_n))_n$ converges in distribution according to \cite[Theorem 4.3.3]{leadbetter1983extremes}, where $a_n=\sqrt{2\log(n)}$ and $b_n=a_n-(2a_n)^{-1}(\log\log(n)+\log(4\pi)).$
We deduce that $\max_{k=1\ldots n}(\xi_k^i)/\sqrt{n\Delta}$ converges in probability to $0.$
Moreover, since $(\xi_k^i)_k$ is a centred Gaussian process, then $\max_{k=1\ldots n}(-\xi_k^i)/\sqrt{n\Delta}$ converges in probability to $0$ by symmetry.
Then we conclude that $\max_{k=1\ldots n} |X_{t_k}^i-\theta_i|/{\sqrt{n\Delta}}$ converges in probability to $0.$
\end{proof}

\begin{proof}[The fractional Brownian Motion ($H_1$)]
The process $(X_t)$ is a fractional Brownian motion with $\mathfrak{h}\in(0,1/2).$ 
From the property of self-similarity and stationarity of increments of the fractional Brownian motion, the following process,
\begin{equation}
Z_t^{(n)}= \frac{X_{tn\Delta+t_0}-X_{t_0}}{\sigma(n\Delta)^{\mathfrak{h}}},\quad t\in[0,1],
\end{equation}
is a fractional Brownian motion. 
We rewrite the test statistic as,
$$
T_n = \frac{1}{(n\Delta)^{1/2-h}} \max_{k=1\ldots n}\left\|Z_{k/n}^{(n)} \right\|
$$
Then $T_n$ is bounded by,
$$
\frac{1}{(n\Delta)^{1/2-h}} \sum_{i=1}^2 \max_{k=1\ldots n} \left|Z_{k/n}^{i,(n)} \right|,
$$
where $Z_{t}^{(n)}=(Z_{t}^{1,(n)},Z_{t}^{2,(n)}).$
The process $Z^{(n)}$ has a version with continuous path as a result of being $\gamma$-Holder continuous for any $\gamma<\mathfrak{h}.$
Let $i\in\{1,2\}$ be fixed.
Then the random variable $\max_{k=1\ldots n} \left|Z_{k/n}^{i,(n)} \right|$ is bounded by,
$$
M_i^{(n)} = \sup_{t\in[0,1]} \left|Z_{t}^{i,(n)} \right|,
$$
which possesses an absolutely continuous density on $\mathbb{R}_+^*$ according to \citet{zaidi2003smoothness}.
That means the sequence $\left(\max_{k=1\ldots n}\left\|Z_{k/n}^{(n)} \right\|\right)_n$ is tight.
Since $\mathfrak{h}<1/2,$ we deduce that $T_n$ converges in probability to $0.$
\end{proof}

\begin{proof}[The fractional Brownian Motion ($H_2$)]
The process $(X_t)$ is a \fBm with $\mathfrak{h}\in(1/2,1).$ 
From the property of self-similarity we get that:
\begin{equation}
Y_n= \frac{\norme{X_{t_n}-X_{t_0}}_2^2}{\sigma^2(t-t_0)^{2\mathfrak{h}}}\sim \chi^2(2).
\end{equation}
We observe that $T_n^2\geq Y_n (n\Delta)^{2\mathfrak{h}-1}.$
Let $x$ be a positive constant.
We have :
\begin{align}
P\left(T_n<x\right)&\leq P\left(Y_n (n\Delta)^{2\mathfrak{h}-1} <x^2\right)\nonumber\\
&\leq P\left(Y_n < x^2/(n\Delta)^{2\mathfrak{h}-1}\right).\label{eq:proof_fbm2}
\end{align}
Since $\mathfrak{h}>1/2,$ $x^2/(n \Delta)^{2\mathfrak{h}-1}$ converges to $0$ as $n\rightarrow \infty.$ 
Then the right hand side of \eqref{eq:proof_fbm2} converges to $0.$ 
That means $P(T_n<x)$ converges to $0$ as $n\to\infty$ : $T_n$ converges to $+\infty$ in probability.
\end{proof}


\subsection{Dependency of the power on the parameters of the parametric alternatives}
\label{app:proof_power}

\begin{lemma}\label{lem:power:brd}
Let $(X_t)$ be a \BRD \eqref{eq:sde_dir}.
Let $\hat{\sigma}_n$ be the estimator of the diffusion coefficient defined in Equation \eqref{eq:sigma_standard}.
The distribution of $T_n$ \eqref{eq:stat:test:discret} depends only on the parameter $v\sqrt{\Delta}/\sigma$ and the trajectory size $n$.
\end{lemma}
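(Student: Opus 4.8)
The plan is to exploit the explicit form of the strong solution of the \BRD together with a scaling argument, showing that after normalization every occurrence of $\sigma$ and $\Delta$ collapses into the single dimensionless quantity $v\sqrt{\Delta}/\sigma$. First I would write the discrete increments explicitly. Let $B$ be the standard 2D Brownian motion driving \eqref{eq:sde_dir} and set $\epsilon_j=(B_{t_j}-B_{t_{j-1}})/\sqrt{\Delta}$, so that $\epsilon_1,\dots,\epsilon_n$ are i.i.d.\ standard bivariate Gaussian vectors whose joint law depends only on $n$. The solution of \eqref{eq:sde_dir} then gives
\begin{equation*}
X_{t_i}-X_{t_0}=v\,i\Delta+\sigma\sqrt{\Delta}\sum_{j=1}^i\epsilon_j=\sigma\sqrt{\Delta}\Bigl(a\,i+\sum_{j=1}^i\epsilon_j\Bigr),\qquad a:=\frac{v\sqrt{\Delta}}{\sigma},
\end{equation*}
and likewise $X_{t_j}-X_{t_{j-1}}=\sigma\sqrt{\Delta}\,(a+\epsilon_j)$.

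Next I would substitute these expressions into $D_n$ \eqref{eq:stat_test1_discr} and into the estimator \eqref{eq:sigma_standard}, factoring $\sigma\sqrt{\Delta}$ out of each. The numerator becomes $D_n=\sigma\sqrt{\Delta}\,\max_{i=1,\dots,n}\norme{a\,i+\sum_{j=1}^i\epsilon_j}$, while $\hat\sigma_{1,n}^2=\frac{\sigma^2}{2n}\sum_{j=1}^n\norme{a+\epsilon_j}^2$, so that, using $t_n-t_0=n\Delta$,
\begin{equation*}
\sqrt{(t_n-t_0)\hat\sigma_{1,n}^2}=\sqrt{n\Delta}\;\hat\sigma_{1,n}=\sigma\sqrt{\Delta}\sqrt{\tfrac12\sum_{j=1}^n\norme{a+\epsilon_j}^2}.
\end{equation*}
Forming the ratio \eqref{eq:stat:test:discret}, the common factor $\sigma\sqrt{\Delta}$ cancels and we obtain
\begin{equation*}
T_n=\frac{\max_{i=1,\dots,n}\norme{a\,i+\sum_{j=1}^i\epsilon_j}}{\sqrt{\tfrac12\sum_{j=1}^n\norme{a+\epsilon_j}^2}}.
\end{equation*}

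The right-hand side is a fixed measurable function of the vector $a=v\sqrt{\Delta}/\sigma$ and of the standard Gaussian vectors $\epsilon_1,\dots,\epsilon_n$, whose joint distribution depends only on $n$. Hence the law of $T_n$ depends only on $v\sqrt{\Delta}/\sigma$ and the trajectory size $n$, which is the claim. There is no genuine obstacle in this argument — it is a clean scaling identity — so the only point requiring care is the bookkeeping of the powers of $\sigma$ and $\sqrt{\Delta}$ so that they cancel exactly in numerator and denominator; recognizing in advance that the dimensionless combination $v\sqrt{\Delta}/\sigma$ is the quantity forced to survive is what makes the computation transparent. I note that the same factorization makes clear why the estimator choice \eqref{eq:sigma_standard} matters: it is precisely because both $D_n$ and $\hat\sigma_{1,n}$ scale as $\sigma\sqrt{\Delta}$ that the dependence on $(\sigma,\Delta)$ reduces to $a$ alone.
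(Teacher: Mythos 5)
Your proposal is correct and follows essentially the same route as the paper's own proof: both write the increments as $X_{t_k}-X_{t_{k-1}}=v\Delta+\sigma\sqrt{\Delta}\,\epsilon_k$ with $(\epsilon_k)$ i.i.d.\ standard Gaussian, substitute into the numerator and the estimator \eqref{eq:sigma_standard}, cancel the common factor $\sigma\sqrt{\Delta}$, and conclude that $T_n$ is a function of $n$ and $v\sqrt{\Delta}/\sigma$ only. Your write-up merely makes the scaling bookkeeping slightly more explicit; no gap.
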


\begin{proof}[Proof of Lemma \ref{lem:power:brd}]
We may rewrite the strong solution of the SDE \eqref{eq:sde_dir} as,
\begin{equation*}
X_{t_{k}}=X_{t_{k-1}}+v \Delta+\sigma \sqrt{\Delta} \epsilon_{k},\qquad k=1\ldots n,
\end{equation*}
where $\sqrt{\Delta}\epsilon_k=B_{t_{k}}-B_{t_{k-1}},$ and $(B_t)$ is a standard Brownian motion.
Then $(\epsilon_k)$ is a sequence of independent Gaussian variables $\mathcal{N}(0,1).$
Furthermore, we have immediately :
\begin{equation*}
X_{t_{k}}-X_{t_0}= vk \Delta+\sigma \sqrt{\Delta}\sum\limits_{i=1}^{k} \epsilon_i,\qquad k=1\ldots n.
\end{equation*}
Finally the test statistic $T_n$ may be rewritten as,
\begin{align*}
T_n=\frac{\max_{k=1,\dots,n}\norme{k\frac{v \sqrt{\Delta}}{\sigma}+\sum\limits_{i=1}^k \epsilon_i}}{\sqrt{\frac{1}{2}\sum\limits_{i=1}^n\norme{\frac{v \sqrt{\Delta}}{\sigma}+\epsilon_i}^2}}.
\end{align*}
As the distribution of $(\epsilon_k)$ is free of the parameters the distribution of $T_n$ depends only on $v\sqrt{\Delta}/\sigma$.
\end{proof}

\begin{lemma}\label{lem:power:fbm}
Let $(X_t)$ be a fractional Brownian motion \eqref{eq:sde_FBM}.
Let $\hat{\sigma}_n$ be the estimator of the diffusion coefficient defined in Equation \eqref{eq:sigma_standard}.
The distribution of $T_n$ \eqref{eq:stat:test:discret} depends only on the parameter $\mathfrak{h}$ and the trajectory size $n$.
\end{lemma}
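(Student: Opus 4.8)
The plan is to mimic the proof of Lemma \ref{lem:power:brd}, replacing the Brownian scaling by the self-similarity of the \fBm. First I would express the sampled path in terms of a standard two-dimensional \fBm. Since $(X_t)$ solves \eqref{eq:sde_FBM}, its increments satisfy $X_{t_i}-X_{t_0}=\sigma(B^{\mathfrak{h}}_{t_i}-B^{\mathfrak{h}}_{t_0})$, where $B^{\mathfrak{h}}=(B^{\mathfrak{h},1},B^{\mathfrak{h},2})$ is a standard 2D \fBm. Using stationarity of the increments I reduce to increments of $B^{\mathfrak{h}}$ over $[0,i\Delta]$, and then, invoking the self-similarity $(B^{\mathfrak{h}}_{c\Delta})_{c}\stackrel{d}{=}(\Delta^{\mathfrak{h}}B^{\mathfrak{h}}_{c})_{c}$ as processes, I obtain the joint equality in distribution
\begin{equation*}
\left(X_{t_i}-X_{t_0}\right)_{0\le i\le n}\ \stackrel{d}{=}\ \sigma\,\Delta^{\mathfrak{h}}\left(W_i\right)_{0\le i\le n},
\end{equation*}
where $(W_i)$ is a standard 2D \fBm of Hurst index $\mathfrak{h}$ sampled at the integer times $0,1,\dots,n$ with $W_0=0$. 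It is crucial that this identity holds jointly in all indices, so that ratios of functionals of the path transform correctly.

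Next I would put $T_n$ into a scale-invariant form. Observing that $t_n-t_0=n\Delta$ and that $\hat{\sigma}_{1,n}^2$ from \eqref{eq:sigma_standard} carries a matching factor $1/(2n\Delta)$, the normalising denominator in \eqref{eq:stat:test:discret} collapses to
\begin{equation*}
(t_n-t_0)\,\hat{\sigma}_{1,n}^2=\tfrac12\sum_{j=1}^n\norme{X_{t_j}-X_{t_{j-1}}}_2^2,
\end{equation*}
so that
\begin{equation*}
T_n=\frac{\max_{1\le i\le n}\norme{X_{t_i}-X_{t_0}}_2}{\sqrt{\tfrac12\sum_{j=1}^n\norme{X_{t_j}-X_{t_{j-1}}}_2^2}}.
\end{equation*}
Both numerator and denominator are positively homogeneous of degree one in the path increments.

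Finally I would substitute the equality in distribution from the first step. The numerator becomes $\sigma\Delta^{\mathfrak{h}}\max_i\norme{W_i}_2$ and the denominator $\sigma\Delta^{\mathfrak{h}}\sqrt{\tfrac12\sum_j\norme{W_j-W_{j-1}}_2^2}$, so the common factor $\sigma\Delta^{\mathfrak{h}}$ cancels and
\begin{equation*}
T_n\ \stackrel{d}{=}\ \frac{\max_{1\le i\le n}\norme{W_i}_2}{\sqrt{\tfrac12\sum_{j=1}^n\norme{W_j-W_{j-1}}_2^2}}.
\end{equation*}
The right-hand side is a fixed measurable functional of the standard 2D \fBm $(W_i)_{0\le i\le n}$, whose law depends only on $\mathfrak{h}$ and $n$; hence so does the law of $T_n$.

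The only delicate point is bookkeeping rather than substance: one must apply self-similarity and stationarity of increments to the whole sampled trajectory at once (an equality in law of processes, not of marginals), so that the single scale factor $\sigma\Delta^{\mathfrak{h}}$ is shared by the maximal displacement and by every one-step increment, and therefore cancels exactly in the ratio. Once this joint equality is in place, the remaining steps are immediate.
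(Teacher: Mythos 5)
Your proposal is correct and takes essentially the same route as the paper: the paper writes the standardized increments $\epsilon_k=(X_{t_k}-X_{t_{k-1}})/(\sigma\Delta^{\mathfrak{h}})$ as a fractional Gaussian noise whose joint law depends only on $\mathfrak{h}$ (this is the same fact you derive explicitly from self-similarity and stationarity of increments), then expresses $T_n$ as the ratio $\max_k\|\sum_{i\le k}\epsilon_i\|\big/\sqrt{\tfrac12\sum_i\|\epsilon_i\|^2}$ in which the factor $\sigma\Delta^{\mathfrak{h}}$ cancels. Your emphasis on the joint (process-level) equality in distribution is exactly the point that makes this cancellation legitimate, so the two arguments coincide in substance.
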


\begin{proof}[Proof of Lemma \ref{lem:power:fbm}]
The fractional Brownian motion may be described by its incremental process \cite{taqqu2003fractional} :
\begin{equation}
\epsilon_k = (X_{t_{k}}-X_{t_{k-1}})/(\sigma \Delta^\mathfrak{h}),\qquad k\geq 1,
\end{equation}
where $(\epsilon_k)$ is a fractional Gaussian noise which is a stationary standardized Gaussian process with autocovariance function $\mathbb{E}(\epsilon_k \epsilon_{k+i})=(1/2)(|i+1|^{2\mathfrak{h}}-2|i|^{2\mathfrak{h}}+|i-1|^{2\mathfrak{h}})$.
Finally the test statistic $T_n$ may be rewritten as,
\begin{align*}
T_n=\frac{\max_{k=1,\dots,n}\norme{\sum\limits_{i=1}^k \epsilon_i}}{\sqrt{\frac{1}{2}\sum\limits_{i=1}^n\norme{\epsilon_i}^2}}.
\end{align*}
Then the distribution of $T_n$ depends only on the trajectory size $n$ and on $\mathfrak{h}$ through the distribution of $(\epsilon_k).$ 
\end{proof}
\newpage
\section{Algorithm}
\begin{algorithm}
\KwIn{$n$, $\alpha,$ $L$}
\tcp{the length $n$ of the trajectory}
\tcp{the probability $\alpha\in(0,1)$}
\tcp{the number $N$ of Monte Carlo experiments }
\KwResult{$q^{(N)}_n(\alpha)$.}
\For{i=1 \emph{\KwTo} N}{
\tcp{Simulation of a Brownian trajectory of size $n$, of variance $\sigma=1$ and with resolution time $\Delta=1$.}
initialization $Y^{(i)}_0=(0,0)^\top$\;
\For{j=1 \emph{\KwTo} n}{
Draw $\epsilon\sim\mathcal{N}(0\mathbin{,}\mathbf{I_2})$\;
$Y^{(i)}_{j}=Y^{(i)}_{j-1}+\epsilon$\;
}
\tcp{Computation of the test statistic}
Compute the ratio $T^{(i)}_n=D_n^{(i)}/\hat{\sigma}_{n}^{(i)}$ from $(Y_0^{(i)},\dots,Y_n^{(i)})$\;
}

\caption[Generation of a $N$-sample $(T^{(1)}_n,\dots,T^{(N)}_n)$.]{Simulation of a $N$-sample $(T^{(1)}_n,\dots,T^{(N)}_n)$ of the distribution of the statistic $T_n$ under $H_0$.} 
\label{algo:mc:sample}
\end{algorithm}
\newpage
\section{Supplementary figures}\label{app:sec:figures}

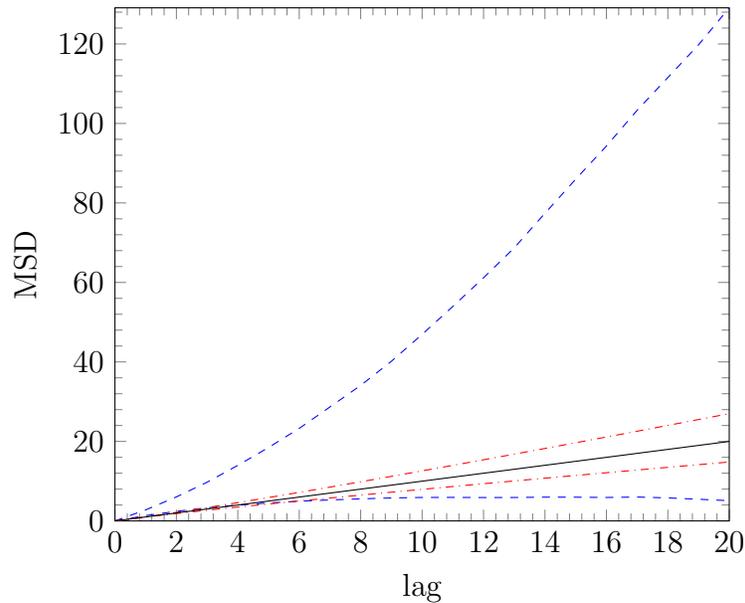
\begin{figure}[h!]
\centering
\pgfplotstableread {fig/msd_curve_cut.txt} {\loadedtable}
\pgfplotsset{
    compat=newest,
    every axis/.append style={
        legend image post style={xscale=0.5}
    }
}
\begin{tikzpicture}[baseline]
\begin{axis}
[xlabel=lag,ylabel=MSD,
minor tick num=4,
enlarge x limits=false,
enlarge y limits=false,
width=0.6\linewidth,
legend style={
   at={(-0.2,-0.1)}, anchor=north west, legend columns=2,font=\tiny}
] 

\addplot[color=blue, dashed] table[x=lag,y=conf_int_inf] from \loadedtable;
\addplot[color=blue, dashed] table[x=lag,y=conf_int_sup] from \loadedtable;
\addplot[color=red, dash dot] table[x=lag,y=feder_inf] from \loadedtable;
\addplot[color=red, dash dot] table[x=lag,y=feder_sup] from \loadedtable;
\addplot[color=black] table[x=lag,y=lag] from \loadedtable;


\end{axis}
\end{tikzpicture}
\caption{A classification rule for motion modes from MSD. The plain line is the theoretical MSD of the standard Brownian motion. The dashdotted lines are the bounds defined by \cite{feder1996constrained}, $t\to t^{\beta},$ $\beta=0.9$ and $1.1.$ If $\hat{\beta}$ the estimation of $\beta$ is such that $0.9<\hat{\beta}<1.1$ it is classified as Brownian motion. The dashed lines are the pointwise high probability interval of 95\% associated to the empirical MSD curve for a standard Brownian motion trajectory of length $n=30.$ 
The bounds of the interval are the 2.5\% and 97.5\% empirical quantile of \eqref{eq:msd_est} and are computed  by Monte Carlo simulation from $10\,001$ Brownian trajectories of size $n=30.$}
\label{fig:msd_curve}
\end{figure}

\begin{table}[!h]
\caption{Monte Carlo estimate of the FDR and mdFDR for both standard and adaptive Procedure \ref{proc1} at level $\alpha=0.05.$ The number of replications is $10\,001.$ 
The error rate estimations are expressed in percentages.}
\centering
\begin{tabular}{cccccc}
\hline 
& &\multicolumn{2}{c}{Standard}&\multicolumn{2}{c}{Adaptive}\\
m & $m_0/m$ & FDR & mdFDR & FDR & mdFDR \\ 
\hline 
100 &0 & 0 & 0  & 0 & 0.2 \\ 
 
&0.2 & 1 & 1 & 3.7 & 3.7 \\ 

&0.4 & 2.1 & 2.1 & 4.2 & 4.2 \\ 

&0.6 & 3.2 & 3.2 & 4.7 & 4.7 \\ 
 
&0.8 & 4.1 & 4.1 & 4.8 & 4.8 \\ 
\hline 
200&0 & 0 & 0 & 0 & 0.4 \\ 
 
&0.2 & 1 & 1 & 3.4 & 3.4 \\ 

&0.4 & 2.1 & 2.1 & 4 & 4 \\ 

&0.6 & 3.2 & 3.2 & 4.6 & 4.6 \\ 
 
&0.8 & 4 & 4 & 4.7 & 4.7 \\ 
\hline 
\end{tabular} 
\label{tab:fdr_mdfdr_100}
\end{table} 

\begin{figure}[h!]
\begin{tikzpicture}
\begin{axis}[
	xticklabels={0,0, 0.02,0.04,0.06,0.08,0.1,0.12,0.14,0.16,0.18},
	ytick={1,2},
	ymin=0,
	ymax=3,
	yticklabels={$H_1$, $H_2$},
	width=\linewidth,
	height=0.35\linewidth,
]
	\addplot[
		mark=*,
		color=blue,
		boxplot prepared={
			lower whisker=0.0004,lower quartile=0.0022,
			median=0.0067,
			upper quartile=0.0222,upper whisker=0.0617,
		},
	]
	coordinates{ };
	\addplot[
		mark=*,
		color=red,
		boxplot prepared={
			lower whisker=0,lower quartile=0,
			median=0.0012,
			upper quartile=0.0252,upper whisker=0.1748,
		},
	]
	coordinates{ };
	\addplot[color=green] coordinates { (0.0229,0) (0.0229,3)};
	\addplot[color=orange] coordinates { (0.0768,0) (0.0768,3)};
	\addplot[color=black] coordinates { (0.05,0) (0.05,3)};
\end{axis}
\end{tikzpicture}
\caption{Boxplots of the \Pv $p_{30}$ (Equation \eqref{eq:p-value_tot}) under $H_1$ and $H_2$. We simulate a set of trajectories $\mathcal{X}_m$ with $m=100$ and $m_0=20$ according to the simulation scheme described in Section \ref{sec:exp-res}. We plot the boxplot of the \Pvs $p^{(i:m)}_{30}$ corresponding to each true alternative hypothesises $H_1$ and $H_2$.  The green (respectively orange) line is the threshold $h=p^{(k^*)}$ obtained by the first step of Procedure 1  (respectively Procedure 1) .The null hypothesis is rejected if the \Pv is lower than $h$. The black line is the level $\alpha=5\%$.}
\label{app:fig:boxplot}
\end{figure}
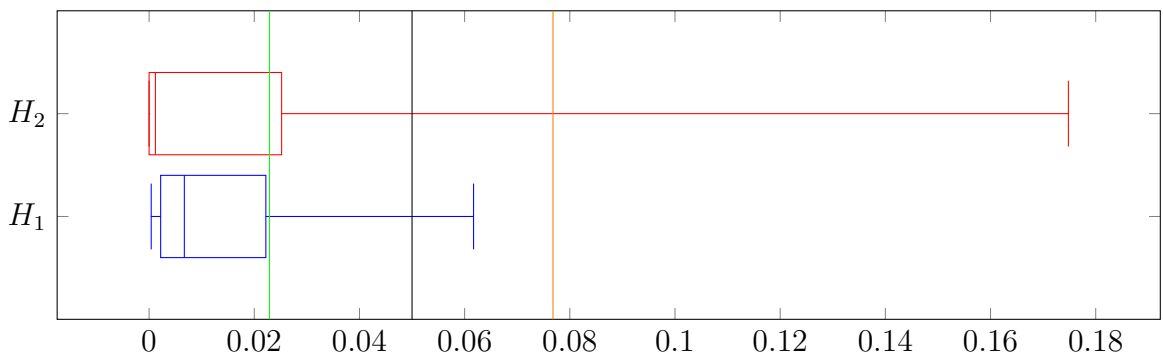

\begin{figure}[h!]
\begin{tikzpicture}
\begin{axis}[
	boxplot/draw direction=y,
	ymin=-0.1,
	ymax=1,
	xtick={1,2,3,4,5,6,7,8,9,10,11,12},
	xticklabels={Br, Sub, Sup,Br, Sub, Sup,Br, Sub, Sup,Br, Sub, Sup},
	width=\linewidth,
	height=0.4\linewidth,
]
	\addplot[
		color=blue,
		boxplot prepared={
			lower whisker=0.3425,lower quartile=0.4266,
			median=0.5301,
			upper quartile=0.5580,upper whisker=0.6386,
		},
	]
	coordinates{};
	\addplot[
		color=blue,
		boxplot prepared={
			lower whisker=0.3012,lower quartile=0.4364,
			median=0.4563,
			upper quartile=0.5690,upper whisker=0.6575,
		},
	]
	coordinates{};
	\addplot[
		color=blue,
		boxplot prepared={
			lower whisker=0,lower quartile=0,
			median=0.0121,
			upper quartile=0.0205,upper whisker=0.0602,
		},
	]
	coordinates{};
	\addplot[
		color=cyan,
		boxplot prepared={
			lower whisker=0.3978,lower quartile=0.5146,
			median=0.6094,
			upper quartile=0.6509,upper whisker=0.7590,
		},
	]
	coordinates{};
	\addplot[
		color=cyan,
		boxplot prepared={
			lower whisker=0.1928,lower quartile=0.3413,
			median=0.3824,
			upper quartile=0.4854,upper whisker=0.6022,
		},
	]
	coordinates{};
	\addplot[
		color=cyan,
		boxplot prepared={
			lower whisker=0,lower quartile=0,
			median=0,
			upper quartile=0.0141,upper whisker=0.0482,
		},
	]
	coordinates{};
		\addplot[
		color=violet,
		boxplot prepared={
			lower whisker=0.2983,lower quartile=0.3998,
			median=0.5365,
			upper quartile=0.5693,upper whisker=0.7048,
		},
	]
	coordinates{};
	\addplot[
		color=violet,
		boxplot prepared={
			lower whisker=0.2470,lower quartile=0.4137,
			median=0.4557,
			upper quartile=0.6002,upper whisker=0.7017,
		},
	]
	coordinates{};
	\addplot[
		color=violet,
		boxplot prepared={
			lower whisker=0,lower quartile=0,
			median=0.0111,
			upper quartile=0.0202,upper whisker=0.0482,
		},
	]
	coordinates{};
			\addplot[
		color=orange,
		boxplot prepared={
			lower whisker=0.0750,lower quartile=0.0863,
			median=0.0967,
			upper quartile=0.1267,upper whisker=0.1627,
		},
	]
	coordinates{};
	\addplot[
		color=orange,
		boxplot prepared={
			lower whisker=0.5904,lower quartile=0.6536,
			median=0.7039,
			upper quartile=0.7326,upper whisker=0.7560,
		},
	]
	coordinates{};
	\addplot[
		color=orange,
		boxplot prepared={
			lower whisker=0.1328,lower quartile=0.1738,
			median=0.2120,
			upper quartile=0.2266,upper whisker=0.2470,
		},
	]
	coordinates{};
\end{axis}
\end{tikzpicture}
\caption{Boxplots of the proportions of Brownian, subdiffusion and superdiffusion computed from 12 Rab11a sequences. In blue proportions obtained with the single test procedure, in cyan with the Procedure 1, in violet with the adaptive Procedure 1 and in orange with the MSD method. Br stands for Brownian, Sub for subdiffusion and Sup for superdiffusion.}
\label{app:fig_rab11}
\end{figure}
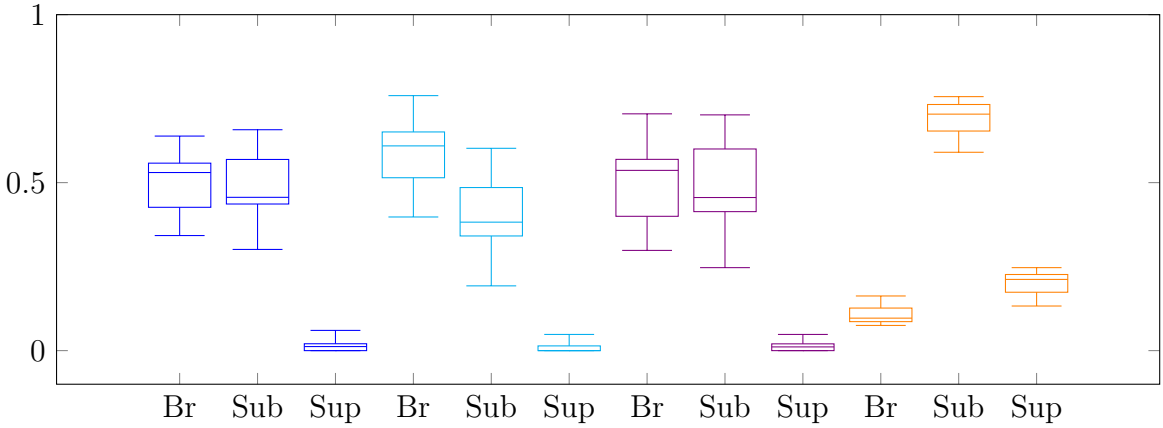

\end{document}